\setlist[enumerate,1]{label=\roman{*}), ref=\roman{*})}
\theoremstyle{plain}
\newtheorem{axiom}{Axiom}
\newtheorem{remark}[axiom]{Remark}
\newtheorem{theorem}{Theorem}[section]
\newtheorem{lemma}[theorem]{Lemma}
\newtheorem{proposition}[theorem]{Proposition}
\newtheorem{corollary}[theorem]{Corollary}
\newtheorem{definition}[theorem]{Definition}
\DeclareMathAlphabet{\mathsfbd}{T1}{\sfdefault}{\bfdefault}{\itdefault}
\SetMathAlphabet{\mathsfbd}{bold}{T1}{\sfdefault}{\bfdefault}{\itdefault}
\DeclareMathAlphabet{\mathsfbdit}{T1}{\sfdefault}{\bfdefault}{\itdefault}
\SetMathAlphabet{\mathsfbdit}{bold}{T1}{\sfdefault}{\bfdefault}{\itdefault}
\DeclareMathAlphabet{\mathsfit}{T1}{\sfdefault}{\mddefault}{\sldefault}
\SetMathAlphabet{\mathsfit}{bold}{T1}{\sfdefault}{\bfdefault}{\sldefault}
\definecolor{redwork}{rgb}{0.73,0.04,0.04}
\definecolor{applegreen}{rgb}{0.55, 0.71, 0.0}
\definecolor{ao}{rgb}{0.0, 0.5, 0.0}
\renewcommand{\(}{\left(}
\renewcommand{\)}{\right)}
\renewcommand{\[}{\left[}
\renewcommand{\]}{\right]}
\newcommand{\lbr}{\left\{} 
\newcommand{\rbr}{\right\}} 
\newcommand{\abs}[1]{\left\lvert #1 \right\rvert}
\DeclarePairedDelimiter\ceil{\lceil}{\rceil}
\newcommand{\rexp}{\mathcal{E}}
\newcommand{\eps}{\epsilon}
\renewcommand{\vec}[1]{\bm{#1}}
\newcommand{\N}{\ensuremath{\mathbb{N}}}  
\newcommand{\R}{\ensuremath{\mathbb{R}}}
\newcommand{\restr}[2]{{
  \left.\kern-\nulldelimiterspace 
  #1 
  \vphantom{\big|} 
  \right|_{#2} 
  }}
\newcommand{\Ind}[1]{
\ensuremath{\mathbbm{1}_{#1}}
} 
\renewcommand{\P}{\ensuremath{\mathbbm{P}}} 
\newcommand{\E}{\ensuremath{\mathbbm{E}}}  
\newcommand{\PPP}{\ensuremath{\mathcal{P}}}  
\DeclarePairedDelimiterX{\IP}[2]{\langle}{\rangle}{#1, #2} 
\newcommand{\dif}[1]{\hspace{3pt}d#1}
\newcommand{\bO}{\mathcal{O}} 
\newcommand{\lo}[1]{o\left(#1\right)} 
\newcommand{\norm}[1]{\ensuremath{\left\lVert #1 \right\rVert}} 
\newcommand{\Lp}[1]{\mathsf{L}^{#1}} 
\newcommand{\PS}[1]{\mathscr{P}_{#1}} 
\newcommand{\ze}{\mathbf{0}}
\renewcommand{\ge}{>}
\newcommand{\set}[1]{\mathcal{#1}}
\providecommand{\keywords}[1]
{
  \small	
  \textbf{\textit{Keywords---}} #1
}
\renewcommand{\PPP}{\Psi}
\begin{document}
\title{Genealogical transition in the noisy $N$-Branching Random Walk. How stronger selection may promote genetic diversity.}

\author[1]{Emmanuel Schertzer}
\author[2]{ Alejandro H. Wences}
\affil[1]{Faculty of Mathematics, University of Vienna}
\affil[2]{LAAS-CNRS, Université de Toulouse, CNRS}

\maketitle
\begin{abstract}
We consider an extension of the noisy $N$-Branching Random Walk that models the evolution of a population subject to natural selection. We show the existence of a critical value for the noise which separates the limiting genealogical structure into two regimes, which we respectively call the semi-pulled and the fully-pulled regimes. In the fully-pulled regime, the genealogy converges to a discrete time Poisson-Dirichlet coalescent. In the semi-pulled regime, the genealogy converges to the Bolthausen-Sznitman coalescent. We discuss some interesting  biological consequences of this result. In particular, our model predicts a non-monotone relation between the selection strength and the effective population size.
\end{abstract}
\keywords{Population genetics, Coalescent theory, Noisy $N$-Branching Random Walk, Pulled fitness waves and their genealogies}

\section{Introduction.}

Darwinian evolution emerges from the processes of mutation, reproduction, competition, and genetic drift.
To model these evolutionary forces, an {\it asexual} population can be naturally encoded as a {\it fitness wave}: a cloud of individuals moving in an abstract $1$-dimensional fitness space \cite{BrunetDerridaMullerMunier2007,RouzineWakeleyCoffin2003, RouzineBrunetWilke2008, NeherHallatschek2013}. 
Assuming discrete-time dynamics, each generation is generally subdivided into two sub-phases: a {\it reproduction phase} followed by a {\it ``culling'' phase}, which constraints the population size to remain at equilibrium. See Figure \ref{fig:model}. 
In the reproduction phase,  each individual gives birth to a random number of individuals whose fitness deviates from the parental fitness due to the accumulation of random mutations along the genome \cite{NeherHallatschek2013}. The reproduction phase generally increases the population size whereas the culling phase, by selecting individuals according to their fitness, maintains the population at  a fixed size $N$.  

Fitness wave models have attracted quite a lot of attention in the Physics and Mathematics literature in the last two decades, see e.g., 
\cite{BrunetDerridaMullerMunier2007,RouzineWakeleyCoffin2003, RouzineBrunetWilke2008, NeherHallatschek2013,DesaiFisher2007,DesaiWalczakFisher2013,Fisher2013,TsimiringLevineKessler1996,LiuSchweinsberg2023,RobertsSchweinsberg2020,Schweinsberg2017-I,Schweinsberg2017-II}. 
These models are natural extensions of the exchangeable (or neutral) paradigm for population models \cite{Cannings74,Cannings75,WencesJegousseJOMB2024}. 
However, even in the simplest settings,
they are notoriously difficult to analyse and, to this date, a large gap remains between our understanding of neutral models and more realistic models with natural selection. 

The aim of the present article is to uncover a new phase transition for fitness waves. We present an exactly solvable model where the parameter space is partitioned into two main regimes, that we call the {\it semi-pulled} and {\it fully-pulled} regimes,
{each characterized by its own genealogical structure}. As we shall see, this phase transition {in the shape of the genealogy } will have important implications for  the following fundamental question in population genetics: 
\begin{enumerate}
\item[(Q)] How is genetic diversity related to selection strength? In other words, if we increase selection strength, does it lead to a decrease or increase in genetic diversity within the population, measured by the expected number of neutral genetic differences between two randomly sampled individuals?
\end{enumerate}
Many results since the foundation of population genetics \cite{Fisher1923, Haldane1927, HaldaneJayakar1963, SmithHaigh1974}, point  to natural selection as  one of the main causes of reduction of genetic diversity in a population. Here we present a directional selection model with the unexpected feature  that stronger selection may induce a higher genetic diversity.
\par

\subsection{The model.}
\label{sect:pop}
We consider a population evolution model of $N$ particles on the real line which reproduce at discrete times. 
Every generation is of size $N$ and particles are  given 
a fitness value on the real line. For $j\leq N$, we denote by $X^N_j(t)$ the position of the $j^{th}$ particle at generation $t$, assuming that particles are ranked 
in decreasing order.
Particles at generation $0$ are placed according to any arbitrary configuration. 
At a given time $t\geq 1$, the population is constructed from
the previous generation in three consecutive steps: one reproduction step, a truncation selection step, and a
 natural selection step. \par

\begin{enumerate}{}{\labelwidth=12pt}
\item[] \emph{(Reproduction)} The children of individual $X^{N}_k(t-1)$ are given by the random point process 
\begin{equation}\label{eq:delta-k}
 \Delta_{k}(t-1)
\end{equation}
where conditionally on $(X_k(t-1);k\in[N])$,  $\Delta_{k}(t-1)$
is an independent Poisson Point Process (PPP) of intensity measure 
$e^{-(s-X_k^N(t-1))}\dif{s}$. After the reproduction step, all parents die out. 
\item[] \emph{(Truncation Selection)} This is parametrized by  $\rho\geq 1$. 
In this step we only retain the $N^\rho$-rightmost individuals among those produced in the previous reproduction step. This is well defined since, almost surely, there are only finitely many particles on any right interval $[a,\infty)$, $a\in\R$.
.   
\item[] \emph{(Natural Selection)} This is parameterized by
$\beta>0$.
Conditionally on the positions of the surviving $N^\rho$ particles, we sample without replacement
$N$ individuals with the probability of picking an individual at position
$y$ being proportional to $e^{\beta y}$. 
\end{enumerate}

When $\rho=1$, the  natural selection step becomes innocuous, keeping always the $N$ right-most newly produced particles.  
After the reproduction and truncation selection steps, the effect of the natural selection step is to bring the total population size back to its equilibrium $N$ by selecting fit individuals 
through a noisy mechanism parameterized by $\beta$. The larger the $\beta$ is, the less noisy this procedure is, so that fittest (rightmost) particles do indeed
make it to the next generation. When $\beta=\infty$ the natural selection step is ``perfect'' in the sense that the $N$ right-most  individuals  are always chosen (thus coinciding with truncation selection of parameter $\rho=1$); whereas if $\beta=0$ then 
individuals are chosen uniformly at random, which corresponds to neutral evolution.
Thorough biological interpretations will be given in Section \ref{bio-interpretation}.\par
The above model
 interpolates between several models in the literature. The cases $\rho=\infty$ and $\beta=\infty$, and $\rho=1$ ($\beta\in\R$, the natural selection step is innocuous), are equivalent to the original integrable exponential model of 
 Brunet and Derrida \cite{BrunetDerridaMullerMunier2006,BrunetDerridaMullerMunier2007,BrunetDerrida2012}.
The case $\rho=\infty$ and $\beta\in(1,\infty)$ coincides with the model introduced by Cortines and Mallein in \cite{CortinesMallein2017}, who find similar
phenomenology as in the Brunet and Derrida model. In this setting, they also proved that the natural selection step is not well defined whenever $\rho=\infty$ and $\beta\leq 1$.
Here we avoid this impasse by introducing the truncation selection step parameterized by $\rho\geq 1$, and focus on the regime $\beta\in (0,1)$ for $\rho \in (1,\infty)$.

\begin{figure}[h]
\centering
\begin{tikzpicture}[x=1pt,y=1pt, scale = 1]
\input{ModelIntro.tex}
\end{tikzpicture}
\caption{Simulations of the passage from generation $t$ to generation $t+1$ with $N=10$, $\rho=2$, and $\beta=0.8$. Time flows from top to bottom. 
The first layer depicts the reproduction and truncation selection steps combined, resulting in a collection of
$N^\rho$ particles.
In the  natural selection step (second layer) 
$N$ particles are randomly selected according to their fitness to maintain the population size constant.\\
} 
\label{fig:model}
\end{figure}

{\bf Structure of the manuscript:} In Section \ref{sec:mainResults} we present our main mathematical results. In Section \ref{sec:discussion}, we  discuss some implications of our mathematical 
results to biology and the traveling wave literature.  Some heuristics of our main results are presented in Section \ref{sect:heuristics}. Complete proofs of the existence of the phase transition are given in Sections \ref{sec:proofsWeakSelection} and \ref{sec:proofsStrongSelectino}.
which rely on an integrability property of our exponential model presented in Section \ref{sect:poisson}. 
Finally, we postpone some technical but straightforward large deviations and uniform integrability results to the Appendix.

\section{Main results.}\label{sec:mainResults}

We investigate  the ancestral lines, the genealogy,  and the speed of evolution of the population in the case $\beta\in(0,1)$ and $\rho\in(1,\infty)$.  
Not surprisingly, stronger selection always increases the speed of the fitness wave (Theorem \ref{th:speedSelection} and Figure \ref{fig:SpeedT2VsBeta2}). However, the model exhibits a surprising phase transition 
when investigating the genealogical structure of the population (Theorem \ref{th:coaConvergence}). More specifically, we observe  a phase transition in the parameter space $(\rho,\beta)$. For a fixed value of $\rho$,
define the critical parameter
$$
\beta_c \equiv \beta_c(\rho):= 1-\frac{1}{\rho}.
$$
The regime $\beta\in (0,\beta_c)$ will be referred to as the {\it fully-pulled} regime, and the 
regime $\beta\in(\beta_c,1)$ as the {\it semi-pulled} regime. 
In section \ref{sec:discussion} below, we provide thorough phenomenological discussions around this phase transition and will justify the terminology in terms of two distinct propagation regimes for the fitness wave. \par

 Our first result provides a characterization of the speed of evolution.
We have the following: 
\begin{theorem}[Speed of evolution]\label{th:speedSelection}  
There exists a deterministic constant $\nu^N$  such that 

\begin{equation}\label{eq:speedSelectionAsmp}
\nu^{ N} = \lim_{t\to\infty}  \frac{X^N_1(t)}{t} = \lim_{t\to\infty} \frac{X^N_N(t)}{t} \ \ \mbox{a.s..}
\end{equation}
Further,  as $N\to\infty$,
\begin{description}
\item[i) Fully-pulled regime ($\beta<\beta_{c}(\rho)$):]
\begin{equation}
 \nu^N= - (\rho - \frac{1}{1-\beta})\log(N) + \E\[\log(Y_\beta)\] + \lo{1}
\end{equation}
where $Y_\beta$ is the positive $(1-\beta)$-stable law with Laplace transform 
$\E\[e^{-\lambda Y_\beta}\]=\exp\lbr-\rho\beta\lambda^{1-\beta} \rbr$.
\item[ii) Semi-pulled regime ($\beta>\beta_{c}(\rho))$:]
\begin{equation}
\nu^N=\log\(\chi\log N\) + \lo{1}.
\end{equation}
where 
\begin{equation}\label{eq:defChi}
\chi\equiv \chi(\beta,\rho)\coloneqq\frac{1-\rho(1-\beta)}{\beta}\in(0,1].
\end{equation}
\end{description}
\end{theorem}\par

\bigskip
Let $t\geq1$ and
let $\tilde{R}^N_1(t)$ be the rank of the parent in generation $t$ of a randomly
sampled individual in generation $t+1$, where particles in generation $t$ are ordered decreasingly according to their position on the real line (the fittest has rank $1$, and the less fit has rank $N$).\par 
Let 
$\PS{[0,1]}\coloneqq\ \cup_{k
\in \N\cup\{\infty\}}\{(\eta_1,\cdots,
\eta_{k})\colon \eta_1\geq \eta_2\geq \cdots; \sum_{i=1}^k \eta_i = 1\}$ be the space of (proper) mass partitions of the unit interval
(e.g. Definition 2.1 \cite{Bertoin2006});
and consider $\bm\eta^\infty=(\eta^\infty_1,\eta^\infty_2,\cdots)$ a random mass partition with Poisson-Dirichlet$(1-\beta,0)$ distribution \cite{PitmanYor97}.
{By Proposition 6 in \cite{PitmanYor97}, $\bm\eta^\infty$ can be constructed by setting $\eta^\infty_i=Z_i/\sum_{j=1}^\infty Z_j$,
where  $Z_1> Z_2>\cdots$ are the atoms of a PPP of intensity $(1-\beta) \frac{dx}{x^{\beta-2}}$ on $[0,\infty)$ arranged in decreasing order. Here we furthermore note that the random variable $Y_\beta$ appearing in Theorem \ref{th:speedSelection} above can be written as $Y_\beta=\sum_{j=1}^\infty Z_j$.}
\par
The next result will justify our terminology and will be discussed in more details in Section \ref{sect:semi-full}.
In the following $\Longrightarrow$ will mean convergence in distribution.
\begin{theorem}[Parental rank]\label{thm:typical}
As $N\to\infty$, for every $t\ge 1$, 
\begin{description}
\item[i) Fully-pulled regime ($\beta<\beta_{c}(\rho)$):] 
$\tilde{R}^N_1(t)\Longrightarrow R_1^\infty$  where 
$\P\(R_1^\infty=k\)= \E\[\eta^\infty_k\].$
\item[ii) Semi-pulled regime ($\beta>\beta_{c}(\rho))$:] 
$\log( \tilde{R}^N_1(t) ) /\log(N) \Longrightarrow \chi \cdot U$  where
$U$ is a uniform r.v. on $[0,1]$.
\end{description}
\end{theorem}\par
We now characterize the genealogy of the population.
For $n\leq N$, let $$
(\tilde\Pi^{N,T}_n(t) \equiv \tilde \Pi_n(t); 0\leq t \leq  T)
$$ be the process taking values in the space $\PS{n}$ of partitions of $[n]=\{1,\dots,n\}$ describing
the genealogy  
of $n$ randomly sampled individuals in generation $T\in\N$ of the population with size $N$. Formally, the random partition
 $\tilde \Pi^{N,T}_n(t)$ is the partition such that  two  indices are in the same block iff the corresponding sampled individuals share the same
 common ancestor at time $T-t$. Clearly $\tilde \Pi^{N,T}_n(0)=\{\{1\},\cdots,\{n\}\}$. \par

  In order to state the next result, 
  let us first recall the construction of discrete-time exchangeable coalescents through Kingman's paintbox coagulation procedure  \cite{Kingman1978}. The latter is driven by a random mass partition ${\bm\eta}\in \PS{[0,1]}$. Given a partition $\Pi\in \PS{n}$, we 
construct a coarser partition $\hat\Pi\in\PS{n}$ by merging (taking the union of) groups of blocks (elements) of $\Pi$. This is decomposed into two steps: (1) every
block of $\Pi$ is assigned an independent uniform random variable on $[0,1]$; (2) all those blocks 
whose uniform r.v. falls in the same interval of type $(\sum_{i=1}^{j}\eta_i, \sum_{i=1}^{j+1}\eta_i]$, with $j\geq1$, are merged together into a single block.

Discrete-time exchangeable coalescent processes
directed by a mass partition $\bm\eta$
consist of sequential Kingman's paintbox coagulations procedures each driven by an i.i.d. copy of $\bm\eta$. See e.g. Theorem 2.1 and Lemma 4.3 in Bertoin \cite{Bertoin2006}. We also refer the reader to \cite{Schweinsberg2000Xi,Bertoin2006} for a thorough exposition of the subject, and also to \cite{MohleSagitov2001, CortinesMallein2017, WencesJegousseJOMB2024} for general criteria for the weak convergence of discrete-time to continuous-time exhchangeable coalescent processes.

 Finally, let $\tilde T_2^N(T)$ be the time to the most recent common ancestor (TMRCA) of two sampled individuals in generation $T$ (setting $\tilde T_2^N(T)=\infty$ if they have different
 ancestors at generation $0$); and consider $\tilde R_2^N(T)$ the 
  rank of the MRCA at that time (setting $\tilde R_2^N(t)=\infty$ on the event $\tilde T^N_2(T) = \infty$).

\begin{definition}\label{definition-wi}
     Let $(\rexp_i)_{i\geq 1}$ be i.d.d. standard exponential random variables
and define 
$$w_i:= 1/(\rexp_1+\cdots+\rexp_i).$$ 

Conditionally on the vector $w_1>\dots>w_{N^\rho}$, sample without replacement $N$ indices $I^N \coloneqq \(I^N_1, \dots, I^N_N\)$ 
among $[N^\rho]$, assuming that the relative weight of $i$ is given by $w_i^\beta$.
Let $\hat I^N$ be the vector $I^N$ arranged in increasing order,  so that 
${\hat{\bm W}^N}\coloneqq ( w_{\hat I^N_1},\dots,w_{\hat I^N_N})$ is arranged in decreasing order. Define
    \begin{equation}\label{def:mass-partition}
{\bm \eta}^N \ := \frac{\hat{\bm W}^N}{\norm{\hat{\bm W}^N}_{\ell^1}} = \ \frac 1 { \sum_{k=1}^N w_{\hat I^N_k} } \  (w_{\hat I^N_1},\cdots, w_{\hat I^N_N} ).
\end{equation}

\end{definition}
  
\begin{proposition}\label{prop:ExpModelMultCoa}
For every
fixed $T\in\N$, we have 
$$
\(\tilde \Pi^{N,T}_n; 0\leq t \leq T\)  \ = \ \(\Pi^N_n(t); 0\leq t \leq T\) \ \ \mbox{in law} 
$$
where $(\Pi^N_n(t); t\geq0)$ is a discrete-time coalescent process directed by $\bm\eta^N$ and started at the partition of singletons $\{\{1\},\cdots,\{n\}\}\in\PS{n}$.\par 
Moreover, letting $R^N_2$ have distribution, conditionally on a realisation of $\bm\eta^N$, given by $\P(R^N_2=i)= (\eta^N_i)^2/\sum_{j=1}^N (\eta^N_j)^2$. Then,
conditionally on the event $\{\tilde T^N_2(T)\leq T-1\}$, we have
$$
\tilde R^N_2(t) = R^N_2 \text{ in law.}
$$
\end{proposition}
The main consequence of the previous result is that the exponential structure of the population model ensures that its genealogical structure  and the
ancestral rank of the MRCA can be characterized through $\Pi^N_n$ and $R_2^N$ respectively, where $\Pi^N_n$ is a an exchangeable coalescent. This corresponds to taking the horizon time $T\to\infty$ in our model.
As we shall see, the phase transition found in our model 
is a consequence of the limiting behavior of the mass partition $\vec \eta^N$ as $N\to\infty$.
The latter will depend on whether the quantity
\begin{equation}\label{eq:defAlpha}
 \alpha\equiv \alpha(\beta,\rho) \coloneqq \rho - \frac{1}{1-\beta}.
\end{equation}
is positive, corresponding to $\beta<\beta_c(\rho)$, or 
negative, which corresponds to $\beta>\beta_c(\rho)$.
\par
{Before presenting our result on the genealogy, let us first introduce the family of $n-\Lambda$-coalescents \cite{Pitman99,Sagitov99}
which are continuous-time Markov process  with values in $\PS{n}$. 
For a finite measure $\Lambda$ on $(0,1)$, the dynamics of the $n-\Lambda$-coalescent are the following: if the process is at state
$\pi=\{\pi_1,\dots,\pi_b\}$, a partition of $[n]$ with $b\leq n$ blocks, 
then any combination $\{\pi_{i_1},\cdots,\pi_{i_k}\}\subset\pi$  of $k$ blocks (with $2\leq k\leq b$) will merge into the single block $\cup_{j=1}^k \pi_{i_j}$ at rate
$$
\lambda_{b,k} = \int_0^1 x^{k-2}(1-x)^{b-k} \Lambda(dx).
$$
The new state is then always a coarser partition of $[n]$, until the absorbing
state $\{\{1,\cdots,n\}\}$ is reached. The Bolthausen-Sznitman coalescent \cite{BolthausenSznitman98} corresponds to the choice $\Lambda(dx)=dx$. This coalescent is of particular interest in Population Genetics  as it has been proposed
as a  null model for the genealogy of rapidly adapting populations \cite{NeherHallatschek2013}, motivating its widespread study (see e.g. \cite{DrmotaIksanovMohleRoesler2007,DrmotaIksanovMohleRoesler2009, GoldschmidtMartin2005, KerstingJegousseWences2021}). }
 
We have the following for the genealogy of our model:\par
\begin{theorem}[Genealogy]\label{th:coaConvergence}
Let $(\Pi^N_n(t); t\geq0)$ be as in Proposition \ref{prop:ExpModelMultCoa}. Then, as $N\to\infty$:
\begin{description}
\item [i) Fully-pulled regime ($\beta<\beta_{c}(\rho)$):]  
Let $\Pi^\infty_n$ 
 be a discrete-time coalescent with $n$ initial particles and directed by the random mass partition
 $\bm\eta^\infty$ of Theorem \ref{thm:typical} i). Then 
\begin{equation*}
  (\Pi^N_n(t); t\in\N) \Longrightarrow (\Pi_n^\infty(t); t\in\N)  
  \end{equation*}
for the product topology for $\(\PS{n}\)^\N$. In particular 
$\lim_{N\to\infty}\E\[T^N_2\] = \beta^{-1}.$\par
Furthermore, $R^N_2\Longrightarrow R^\infty_2$  where
\begin{align*}
\forall i\geq 1, \quad\P\(R^\infty_2 = i \) &=\frac{\E\[\(\eta^{\infty}_i\)^2\]}{\beta};
\end{align*}
\item[ii) Semi-pulled regime ($\beta>\beta_{c}(\rho))$:] 
The time-scaled process $$(\Pi^N_n(\lfloor {\chi \log(N)} t \rfloor); t\geq0)
$$ converges in distribution 
in the Skorohod topology $D([0,\infty), \PS{n})$ to the  $n$-Bolthausen-Sznitman coalescent. In particular,
$
\lim_{N\to\infty}\E\[T^N_2\]/\log N=\chi.
$\par
Furthermore, $R^N_2\Longrightarrow 1$ in distribution.

\end{description}
\end{theorem}

\section{Biological interpretation}\label{sec:discussion}

\subsection{Biological assumptions}
\label{bio-interpretation}

The reproduction phase in our model incorporates both a reproduction and a mutation mechanisms, where daughter particles have a mutated fitness (position on the real line) centered around the fitness of the parent. Our choice of the exponential point process in the reproduction phase  may seem artificial at first sight. Our motivation is two-fold. First, 
as in \cite{BrunetDerridaMullerMunier2006,BrunetDerridaMullerMunier2007,BrunetDerrida2012}, the model turns out to be exactly solvable. Second, extremal theory states that the extremes of a long sequence of Gaussian random variables converges (after scaling and centering)
to the same exponential point process (see e.g. Example 1.1.7 \cite{HaanFerreira2006}). Since fitness waves are usually driven by extremal particles \cite{BerestyckiBerestyckiSchweinsberg2013}, our model should be a good approximation of a model where individuals produce a large number of offspring with a Gaussian perturbation of their parental fitness. 
For the original model in \cite{BrunetDerridaMullerMunier2006,BrunetDerridaMullerMunier2007,BrunetDerrida2012}, more ``realistic'' models (which do not rely on the assumption of an exponential birth process) have been shown to exhibit an analog qualitative behavior, see e.g. \cite{BerestyckiBerestyckiSchweinsberg2013}.\par

The main objective of the truncation selection step is to fix the total carrying capacity of the environment, keeping the number of initial 
 offspring to be finite and fixed before the natural selection phase. Alternatively, this constraint could be enforced by assuming that every individual in the parental population has a fixed number of offspring. For instance, for every individual one can only retain the first $N^{\rho-1}$ rightmost individuals in the PPP encoding the position of its offspring. By doing so  the total number of offspring before the natural selection step is  $N^\rho = N^{\rho-1} \times N$ as in the present model. 
This model corresponds to the case where the number of offspring  produced during the reproductive phase is large ($r$-elected species \cite{MacArthurWilson2001}). Thus, our $\rho$ parameter can be re-interpreted as a measure of the number of offspring per individual.
  While this alternative model is presumably more realistic, it turns out to be much more challenging to solve than our model. However, the numerical simulations in Figure \ref{fig:T2fixedOffspSize}  show 
 that both models have similar qualitative behavior.

\begin{figure}[h]
\centering\includegraphics[scale = 0.4]{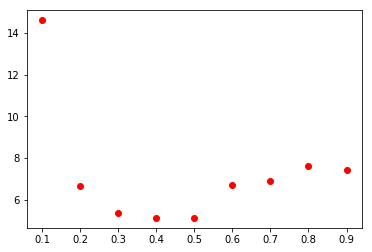}
\caption{ Model with no truncation selection but assuming that every individual produces $N^{\rho-1}$ children ($N=1000$ and $\rho=2$) during the branching step. Here the particles produced by a parent positioned at $x\in\R$ correspond
to the $N^{\rho-1}$ right-most particles of an independent PPP of intensity $e^{-(s-x)}ds$. As in our exactly solvable model, $\E(T_2^N)$ (y-axis) 
is non-monotone  as a function of $\beta$ (x-axis), i.e., genetic diversity  may increase as the strength of natural selection increases.}
\label{fig:T2fixedOffspSize}
\end{figure}

As mentioned above, the natural selection step is noisy and 
the level of noise is captured by the $\beta$ parameter of our model: $\beta=\infty$ consists in selecting the $N$ fittest individuals (no noise); whereas $\beta=0$ consists in selecting individuals uniformly at random. 
This feature of our model reflects that the survival of individuals depends not only on their
relative fitness but also on  random and/or confounding factors.

\subsection{Non linear response of evolution to natural selection}
{Our model predicts a complex response of evolutionary characteristics (speed of evolution and genetic diversity)
to the selection parameters $\rho$ and $\beta$. Let us fix the carrying capacity $N^\rho$ of the environment, so that the value of $\beta$ can be directly interpreted as a measure of the strength of natural selection. }

\bigskip

{\it Speed of evolution.} 
As expected, the speed of evolution is increasing with $\beta$ (Theorem \ref{th:speedSelection}); however, the relationship between speed and $\beta$ is highly non-linear (see right panel in Figure \ref{fig:SpeedT2VsBeta2}).
 As a matter of fact, as $N\to\infty$ and as $\beta$ increases, 
the gain on the speed of evolution becomes more and more negligible.

\bigskip

{\it {Genetic diversity}.} In population genetics, the effective population is often defined as 
$$
N_{e} := {\E[T_2^N]}.
$$
and measures the ``depth'' of the genealogy for two sampled individuals.
Assuming that neutral mutations accumulate at constant rate along the genome (as in the infinite sites model {\cite{Wright1931, GriffithsTavare1994}}) 
the expected number of neutral genetic differences (e.g. single nucleotide polymorphisms) between two individuals is proportional to the effective population size $N_e$ \cite{Durrett2008}. As a consequence, $N_e$ provides a measure of the expected neutral genetic diversity in the population.\par

{As mentioned above, many results  \cite{Fisher1923, Haldane1927, HaldaneJayakar1963, SmithHaigh1974}, point to natural selection as  one of the main causes of reduction of genetic diversity in a population.} 
{
Our model predicts that increasing the strength of selection may in fact induce a higher genetic diversity.
 We show that  $N_e$
decreases as a function of  $\beta$ on $(0,\beta_c)$ but, surprisingly,  it {\it increases} on the interval $(\beta_c,\infty)$ (see the left panel of Figure \ref{fig:SpeedT2VsBeta2} and
Theorem \ref{th:coaConvergence}). Indeed, even if we observe a
general reduction of $N_e$ when compared to neutral models --for which $N_e=\bO(N)$--, in our model it is on the order of $N_e=\bO(\log(N))$ in the ``weak selection regime'' ($\beta\in (0,\beta_c)$), and on the order of $\bO(1)$
in the ``strong selection regime'' ($\beta\in (\beta_c,1)$).}

\begin{figure}[h]
\centering
\scalebox{1}{\begin{tikzpicture}[x=1pt,y=1pt, scale = 0.9]
\input{SpeedT2VSBeta.tex}
\end{tikzpicture}}
\caption{The theoretical approximations for $\E\[T^N_2\]$ on the left, and the speed of evolution $\(\nu^N\)$ on the right, for $N=10^3$, along with simulated values (black circles), 
         as functions of the parameter 
         $\beta$ with fixed $\rho=2$ (thus $N^\rho=10^6$). 
  The phase transition occurs at $\beta_c=1/2$.  The genetic diversity is increasing with the selection strength in the strong selection regime. Further, for large values of $N$, the genetic diversity is typically higher in the strong selection regime as compared to the weak selection regime.}
  \label{fig:SpeedT2VsBeta2}
\end{figure}

\subsection{Semi and fully pulled waves.}\label{sect:semi-full}

Let us now describe the phenomenology of the phase transition observed in our model in terms of traveling wave dynamics.
According to Theorem \ref{thm:typical}, one common feature of the semi and fully pulled regime is that
$$
\tilde R_1^N(t) << N/2
$$
where we recall that $\tilde R_1^N(t)$
is the rank of the ancestor of a randomly sampled individual at time $t$.
In other words, at any given time, individuals always descend from a highly ranked parent, that is when compared to the average rank ($N/2$) of an individual chosen uniformly at random. 
Such propagation fronts are  referred to as {\it “pulled”} waves, because the wave is pulled along by the action of individuals located to the right of the front \cite{Saarloos2003}.   Theorem \ref{thm:typical} then uncovers a phase transition in the pulled regime which is new to our knowledge:
\begin{enumerate}
\item In the fully pulled regime, the wave is pulled by the very extreme individuals: the parental rank is $O(1)$ so that the wave at generation $t+1$ is generated by the individuals close to the right edge at generation $t$. 
\item In contrast, in the semi pulled regime, the wave is pulled by high ranked individuals, but those individuals are still far from the right edge: the parental rank is $\approx N^{\chi U}$. 
\end{enumerate}

\bigskip 

Semi- and fully- pulled fronts are characterized by very distinct genealogical structures. 
For semi-pulled waves,
the genealogy converges to the celebrated Bolthausen-Sznitman coalescent \cite{BolthausenSznitman98, Pitman99, Sagitov99}. 
This is consistent with previous findings in \cite{ BrunetDerridaMullerMunier2007, BrunetDerrida2012} where the  natural selection step consists in deterministically selecting the $N$-fittest individuals after the reproduction phase (truncation selection only).
The same universal behavior has been found in many population models (see \cite{Schweinsberg2017-II, CortinesMallein2017} for rigorous results, and 
\cite{NeherHallatschek2013, DesaiFisher2007, DesaiWalczakFisher2013} for non-rigorous arguments). 

One of the interesting features of the present model is that the Bolthausen-Sznitman universal behavior breaks down when the selection noise below the critical parameter $\beta_c$ and the transition from the semi to the fully pulled regime occurs. In the fully pulled regime, the genealogy then converges to a discrete time coalescent (Poisson-Dirichlet coalescent).

\bigskip

Let us now comment on the difference of time scale between the two regimes. 
In the fully pulled regime, ancestral lineages only visit sites at distance $O(1)$ from the front so that coalescence may occur with positive probability at every generation.
This implies that coalescence occurs on a time scale of order $1$ and the limiting genealogy is in discrete time.

In contrast,
the semi-pulled regime is driven by extreme events of the rightmost particle. 
 If we trace the ancestry of a sampled individual backwards in time, his lineage will typically be located far from the edge (Theorem \ref{thm:typical}ii)) where a coalescence with other lineages is unlikely to happen. However, at some exceptional past generation,  the right-most individual of the fitness wave happens to have a much higher relative fitness than the rest of the population. At that time, some of the ancestral lineages will jump to the rightmost point of the traveling wave ($R_2^N\Longrightarrow 1$ according to Theorem \ref{th:coaConvergence} ii)) and then coalesce.
 As indicated by Theorem \ref{th:coaConvergence}ii), the typical time scale to reach such generations is of order $\log(N)>>1$ and the limiting genealogy is now in continuous time.

\subsection{Comparison with other front propagation models}
Intricate phase transitions in front propagation have been observed in previous works.   
 One interesting example \cite{BirzuHallatschekKorolev2020, RoquesGarnierHamelKlein2012, Tourniaire2024, EtheridgePenington2022} is a population whose large scale behavior is described by the noisy F-KPP equation with Allee effect 
\begin{equation}\label{eq:B}
\partial_t u = \frac{1}{2} \partial_{xx} u + u(1-u)(1+Bu) + \sqrt{\frac{u(1-u)}{N}} \eta
\end{equation}
where $u\equiv u(t,x)$ is a  function describing the density of the population at time $t$ at the location $x\in\R$, $\eta$ is space-time white noise, $B>0$ encodes the magnitude of cooperation in the population, and $N$ is a large demographic parameter. \par
When $B>2$ the model exhibits {\it pushed} wave dynamics in the sense that the evolution and the genealogy are mainly driven by the {\it bulk} of the population. From a genealogical perspective, this implies that genealogical lines 
are located in the region the traveling wave where $u=O(1)$ (the bulk).

In turn, the pushed regime is itself partitioned into 2 sub-regimes: the semi-pushed regime ($B\in(2,4)$) and the fully-pushed regime ($B>4$).
In the fully-pushed regime, the fluctuations at the edge of the front obey the central limit theorem and the genealogy is conjectured {\cite{TourniaireSchertzer2024, BirzuHallatschekKorolev2020}} to converge to the Kingman coalescent \cite{Kingman1982} on a time scale of order $N$. In the the semi-pushed regime 
the fluctuations are described by a heavy tailed distribution and the genealogy  is conjectured \cite{BirzuHallatschekKorolev2018, FoutelSchertzerTourniaire2024, Tourniaire2024} to converge to a Beta coalescent \cite{Schweinsberg2003} on a time scale of order $N^a$
with $a<1$. 
In particular, the genealogical depth reduces when transitioning from the fully-pushed to semi-pushed regime.\par
When $B<2$, the population is now mainly driven by  individuals at the right edge of the front, and propagates according to what we now call a semi-pulled wave (previously referred to simply as a pulled wave). Analogously to our model in the strong selection regime, ancestral lines at coalescent times are typically placed away from the bulk population and the genealogy is conjectured {\cite{BirzuHallatschekKorolev2020}} to converge to the Bolthausen- Sznitman coalescent. 
\bigskip

Let us now compare the pushed regime for the noisy F-KPP equation with Allee effect described above with our results for pulled fitness waves. 
Our model is always pulled.
However, and analogously to (\ref{eq:B}),  the pulled regime is partitioned into two sub-regimes: the semi-pulled and fully-pulled regime.
Analogously to the pushed case, we observe that the genealogical depth also reduces when transitioning from the fully-pulled to the semi-pulled regime.
As a consequence, we observe a similar transition to (\ref{eq:B}), with the notable difference, that this transition 
is now driven by the particles at the edge of the front, as opposed to the particles in the bulk. 

\bigskip

The previous discussion 
is summarized in Table \ref{tab1}. 
In particular, we observe that even if all the models alluded are non-exchangeable (due to the presence of spacial structure or selection) the genealogies 
become exchangeable in the large population limit \cite{ Pitman99, Sagitov99}. This suggests that the universality class of exchangeable genealogies is presumably much larger than originally expected and that those models do not only emerge from discrete exchangeable models \cite{Schweinsberg2003} but also from from the complex interactions within travelling fronts.

\bigskip

\begin{table}[h]\label{tabel2}
\begin{center}
\begin{tabular}{|| c | c | c || c  | c ||}
 \hline
 & \mbox{Fully-Pushed} & \mbox{Semi-Pushed} & \mbox{Semi-Pulled} & \mbox{Fully-Pulled}  \\ [0.5ex] 
  & \mbox{$2<B<4$} & \mbox{$4<B$} &  $\beta > \beta_c$ & $\beta < \beta_c$  \\ 
 \hline\hline
 \mbox{TMRCA} &{$O(N)$} & { $O(N^a), a<1$} &  { $O(\log(N)^\alpha)$} & { $O(1)$}  \\ 
 \hline
 Genealogy &  { Kingman} & { $\mbox{Beta}(2-a,a)$}  & { Bolthausen-Sznitman} & { \mbox{PD$(1-\beta,0)$}}  \\ 
\hline

\end{tabular}
\caption{\label{tab1} Pushed transition for the noisy F-KPP equation (left two columns), and the pulled analog (right two columns) observed in our model.}
\end{center}
\end{table}

\section{Notation.}
We will write
\begin{align*}
f(x)\sim g(x) \text{ as }x\to a \text{ if }&\lim_{x\to a}\frac{f(x)}{g(x)} =1, \\
f(x)=\lo{g(x)} \text{ as }x\to a \text{ if }&\lim_{x\to a}\frac{f(x)}{g(x)} =0,
\end{align*}
and
$$
f(x)=\bO(g(x)) \text{ as }x\to a \text{ if }\lim_{x\to a}\frac{f(x)}{g(x)} <+\infty.
$$
We will also use the symbol ``$\Longrightarrow$'' to refer to convergence in distribution.\par
Also, in the following, $PPP(\mu)$ will denote a PPP of intensity $\mu$ on $\R$.

{ \section{Integrability}
\label{sect:poisson}
As highlighted in the studies by Brunet and Derrida \cite{BrunetDerrida2012} and Cortines and Mallein \cite{CortinesMallein2018}, the exponential structure of our population model offers an explicit representation of both the  speed of evolution and the genealogical structure of the population. This is the subject of the present section.
We first show that, after just one single generation, the relative positions of the particles reach an equilibrium distribution. 
\begin{proposition}\label{prop:equilibrium}
    For every $t\geq 1$, define $X^N_{eq}(t)\coloneqq \log(\sum_{i=1}^N e^{X^N_i(t)}).$ Then 
    $$
    \bigg(\left(X^N_1(t),\cdots, X^N_N(t)\right) -  X^N_{eq}(t-1); t\geq 1 \bigg)
    $$
    is a sequence of i.i.d. random variables distributed as follows. \begin{enumerate}
\item Take the $N^\rho$ right-most elements of a $PPP(e^{-x} dx)$.
\item Sample $N$ particles from the resulting set assuming that the sampling weight of a particle in $y$ is given by $e^{\beta y}$.
\end{enumerate}
\end{proposition}
\begin{proof}
Recall the definition of $\Delta_{k}(t)$ in (\ref{eq:delta-k}) which describes the relative position of the offspring of $X_k^N(t-1)$.
According to Proposition 1.3 in \cite{CortinesMallein2018},$$
\bigg(\sum_{k=1}^N \{\Delta_{k}(t) - X^N_{eq}(t-1)\}; t\geq1\bigg)
$$
is a sequence of i.i.d. PPPs with intensity measure $e^{-x} dx$. This can be easily seen from the Superposition Theorem for PPPs (see e.g. \cite{Kingman1992}). In our case: superimposing $N$ independent PPPs with intensity measure $e^{-(x-x_i)}dx$ yields a PPP with intensity measure $e^{-(x-x_{eq})}$ where $x_{eq}=\log(
\sum_{i=1}^N e^{x_i}
)$. Our Proposition then readily follows from the definition of our population model. 
\end{proof}
In order to prove Proposition  \ref{prop:ExpModelMultCoa}, we will need the following Lemma that follow from standard transformation identities (see e.g., the Mapping Theorem in \cite{Kingman1992}).
\begin{lemma}\label{le:pppsIdentities}
Let $\left(w_i = 
\frac{1}{{\cal E}_1+\cdots+{\cal E}_i}\right)_i$ as in Definition \ref{definition-wi}. Then
\begin{enumerate}
\item $(w_i)_i$ is identical in law to the ordered atoms of $\Psi$, where $\Psi$ is  $PPP(x^{-2}dx)$.
\item $(\log(w_i))_i$ is identical in law to the ordered atoms of $PPP\(e^{-x}dx\)$.
\end{enumerate}

\end{lemma}

\begin{proof}[Proof of Proposition \ref{prop:ExpModelMultCoa}]
Let $i\in[N]$ and $t\in\mathbb{N}$.
Define $A^N_i(t)$ to be the rank of the ancestor of individual $X_{i}^N(t)$. That is, $A^N_i(t)=j$ iff $X^N_j(t-1)$ is the parent of $X^N_i(t)$. Define
$$
{\cal H} := \sigma\{ X^N_j(t); j\in[N], t\in \mathbb{N}  \}.
$$
According to Lemma 1.6 in \cite{CortinesMallein2018},
the sequence $(A^N(t); t\geq1)$ is i.i.d. Moreover, it remains independent conditionally on ${\cal H}$ and 
for any ${\bf k}\in [N]^N$,
$$
\P\left( A^N(t+1) = {\bf k} \ | \ {\cal H} \right) \ = \prod_{i=1}^N \tilde \eta_{k_i}(t)
$$
where 
$$
\tilde \eta_{k}(t) \ = \ \frac{e^{X^N_{k}(t) -X_{eq}^N(t-1)}}{\sum_{j=1}^N e^{X^N_{j}(t) -X_{eq}^N(t-1)}}.
$$

We refer to the proof of Lemma 1.6. in \cite{CortinesMallein2018} for a short proof of the latter fact. Finally, the proof is completed by combining Proposition \ref{prop:equilibrium} and Lemma \ref{le:pppsIdentities}.
\end{proof}

Finally, the following analogue of Lemma 1.5. in   \cite{CortinesMallein2017}
is a first step towards the characterization of the speed of selection
in Theorem \ref{th:speedSelection}.
\begin{proposition}[]\label{prop:speed-of-selection}
The two limits 
$$\lim_{t\to\infty} \frac{X^N_1(t)}{t},\quad \lim_{t\to\infty}\frac{X^N_N}{t}$$ exist a.s. and
are equal to 
\begin{equation}\label{eq:speedSelection}
\nu^N = \E\[\log\(\sum_{k=1}^{N} w_{I^N_k}\)\].
\end{equation}
\end{proposition}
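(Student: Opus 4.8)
\emph{Plan of proof.} The plan is to reduce the statement to the law of large numbers for a random walk, using the stationarity-up-to-translation that is already established inside the proof of Proposition~\ref{prop:ExpModelMultCoa}. Conditionally on the configuration $(X^N_1(t),\dots,X^N_N(t))$ at generation $t$, the offspring point process is the superposition of independent $\mathrm{PPP}\!\left(e^{-(x-X^N_i(t))}dx\right)$, hence a $\mathrm{PPP}(e^{-x}dx)$ translated by the deterministic amount $Z_t:=\log\big(\sum_{j=1}^N e^{X^N_j(t)}\big)$, and this translated process, together with the subsequent truncation to the $N^\gamma$ rightmost atoms and the $\beta$-sampling, is independent of the past. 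I would therefore write
\begin{equation*}
X^N_j(t+1)\;=\;Z_t+Y_j(t+1),\qquad j=1,\dots,N,
\end{equation*}
where the vectors $\big(Y(t+1)\big)_{t\ge0}$ are i.i.d., each distributed as an $N$-point $\beta$-sample drawn (without replacement) from the $N^\gamma$ rightmost atoms of a standard $\mathrm{PPP}(e^{-x}dx)$. The first substantive step is then to isolate the recursion obeyed by the ``bulk'' variable $Z_t$: since $\sum_j e^{X^N_j(t+1)}=e^{Z_t}\sum_j e^{Y_j(t+1)}$, one gets $Z_{t+1}=Z_t+G_{t+1}$ with $G_{t+1}:=\log\big(\sum_{j=1}^N e^{Y_j(t+1)}\big)$, so that $Z_t=Z_0+\sum_{s=1}^t G_s$ is a random walk with i.i.d.\ increments.

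Next I would verify integrability. Let $M_1\ge\cdots\ge M_{N^\gamma}$ denote the $N^\gamma$ rightmost atoms of the underlying $\mathrm{PPP}(e^{-x}dx)$; every sampled point lies in $[M_{N^\gamma},M_1]$, so
\begin{equation*}
M_{N^\gamma}\;\le\;\min_{j\le N}Y_j(1)\;\le\;\max_{j\le N}Y_j(1)\;\le\;M_1,\qquad |G_1|\le\log N+|M_1|+|M_{N^\gamma}|.
\end{equation*}
By Lemma~\ref{le:pppsIdentities}, $M_1\overset{d}{=}-\log E_1$ and $M_{N^\gamma}\overset{d}{=}-\log(E_1+\cdots+E_{N^\gamma})$ with $E_i$ i.i.d.\ standard exponential, and both are integrable (the negative logarithm of a Gamma variable has finite first absolute moment, since near the origin the density is bounded and $\int_0^1|\log x|\,dx<\infty$). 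Hence $\E|G_1|<\infty$, $\E|\max_{j\le N}Y_j(1)|<\infty$ and $\E|\min_{j\le N}Y_j(1)|<\infty$.

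With these bounds the conclusion follows from two classical facts. The strong law of large numbers gives $Z_t/t\to\E[G_1]$ a.s., while for the i.i.d.\ integrable sequences $\big(\max_{j}Y_j(t)\big)_{t\ge1}$ and $\big(\min_jY_j(t)\big)_{t\ge1}$ the Borel--Cantelli lemma gives $\max_jY_j(t)/t\to0$ and $\min_jY_j(t)/t\to0$ a.s. Writing $\max_{j\le N}X^N_j(t)=Z_{t-1}+\max_{j\le N}Y_j(t)$ and likewise for the minimum then shows that the two limits displayed in the statement exist a.s.\ and both equal $\E[G_1]$. Finally, the equality in law $\{e^{x}\}_{x\in\mathrm{PPP}(e^{-x}dx)}\overset{d}{=}\Xi$ from Lemma~\ref{le:pppsIdentities} turns $\beta$-sampling by the weight $e^{\beta x}$ into $\beta$-sampling by the weight $w^\beta$, so that $\sum_{j=1}^N e^{Y_j(1)}\overset{d}{=}\sum_{k=1}^N w_{I^N_k}$ and therefore $\E[G_1]=\E\big[\log\big(\sum_{k=1}^N w_{I^N_k}\big)\big]=\nu_N$. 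The only point requiring genuine care is the very first one---that the recentred configurations $Y(t+1)$ are i.i.d.\ and independent of the $\sigma$-algebra generated by the previous generations---but this is exactly what the proof of Proposition~\ref{prop:ExpModelMultCoa} provides, so no real obstacle remains; everything else is the law of large numbers together with the elementary moment estimates above.
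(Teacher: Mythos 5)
Your argument is correct and follows essentially the same route as the paper's own sketch: identify the post-reproduction cloud as a translated $\mathrm{PPP}(e^{-x}dx)$, observe that the translation $Z_t=\log\big(\sum_j e^{X^N_j(t)}\big)$ is a random walk with i.i.d.\ increments distributed as $\log\big(\sum_k w_{I^N_k}\big)$, and apply the strong law of large numbers. The only difference is that you supply the details the paper dismisses as ``easy to deduce'' --- the integrability of the increment via the bounds $M_{N^\gamma}\le Y_j\le M_1$ and the Borel--Cantelli control of $\max_j Y_j(t)/t$ and $\min_j Y_j(t)/t$ --- and these are carried out correctly.
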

\begin{proof}
By Proposition \ref{prop:equilibrium}
the sequence 
$(X^N_{eq}(t+1)-X^N_{eq}(t); t\geq1$) is i.i.d. with finite mean. By the law of large numbers and the identities of Lemma \ref{le:pppsIdentities}, we have
\begin{equation}\label{eq:EMspeed1}
\lim_{t\to\infty}\frac{X^N_{eq}(t)}{t} \overset{a.s.}{=} \E\[\log\(\sum_{i=1}^N w_{I_i}\)\].
\end{equation}
The rest of the proposition  follows easily from
the fact that the sequences $(X_1^N(t) - X^N_{eq}(t-1); t\geq 1)$ and $(X_N^N(t) - X^N_{eq}(t-1); t\geq 1)$ are i.i.d. with finite mean.
\end{proof}}

{\section{Heuristics for the genealogy}
\label{sect:heuristics}
In this section, we derive the  guiding heuristics for the phase transition between the semi- and fully-pulled regimes. Since the genealogy is given by an exchangeable coalescent directed by $\eta^N$ (Proposition \ref{prop:ExpModelMultCoa}), the phase transition will emerge from the asymptotics of the mass partition $\eta^{N}$ introduced in Definition \ref{definition-wi}.

\subsection{ Fully-pulled regime.} Let $\beta<\beta_c(\rho)$ and recall that $\alpha\equiv\alpha(\beta,\rho)$ in \eqref{eq:defAlpha} satisfies $\alpha>0$ in this case.
Set
$$
\forall y>0,  \ \ L^N(y) \ = \ \#\{ i  : N^\alpha  w_{\hat I_{i}^N} \geq y\},
$$
Let us first interpret this quantity. Recall that $w_i=({\cal E}_1+\cdots+{\cal E}_i)^{-1}$ and 
for $i>>1$, the law of large numbers imply that $w_i \approx \frac{1}{i}$. 
As a consequence, $L^N(y)$ can be interpreted as the number of elements in $(w_i)_{i=1}^{N^\rho}$
that are selected in the $\beta$-sampling (natural selection)
and whose rank is smaller than $N^\alpha/y$.

The guiding heuristic  is that 
sampling without replacement of the indices $I^N$ in Definition \ref{definition-wi} 
becomes equivalent in the limit as $N\to\infty$ to a sampling with replacement. This will be shown by a coupling argument in Proposition
\ref{cv:joint}. This implies that 
\begin{eqnarray*}
\P\left(L^N(y) = n\right) \ \approx \ {N \choose n} p^n(1-p)^{N-n}, \ \ \mbox{where $p=\frac{\sum_{N^\alpha  w_i \geq y}  w_i^\beta}{\sum_{i=1}^{N^\rho} w_i^\beta}$.} 
\end{eqnarray*}
Since $\beta<1$ and $\alpha >0$,  and $w_i\approx 1/i$ for large $i$, a Riemann approximation yields
\begin{eqnarray*}
p & \approx  \frac{\sum_{i\leq N^\alpha/y}\frac{1}{i^\beta}}{ \sum_{i=1}^{N^\rho} \frac{1}{i^\beta}} 
& \approx  \frac{1}{N} \left(\frac{1}{y^{1-\beta}}\right)^n
\end{eqnarray*}
so that, as $N\to\infty$ and for fixed $n\geq1$,
\begin{eqnarray*}
\P\left(L^N(y) = n\right) & \approx & \frac{1}{n!} e^{-\frac{1}{y^{1-\beta}}} (\frac{1}{y^{1-\beta}})^n.
\end{eqnarray*}
{Recall} $Z_1>\cdots Z_N>\cdots$ the atoms of a PPP($(1-\beta) \frac{dx}{x^{\beta-2}}$) arranged in decreasing order.
The previous computations show that 
$$
\P(L^N(y) = n) \ \approx \ \P\left( \#\{ Z_i : Z_i \geq y  \} =  n \right). 
$$
A generalization of the previous computation conducted in Lemma \ref{ref:lem22222} will show that 
$$
N^\alpha(w_{\hat I^N_1}, w_{\hat I^N_{2}},\cdots) \ \Longrightarrow (Z_1,Z_2,\cdots),
$$
so that the random mass partition can be approximated by
$$
\vec\eta^N \approx \vec\eta^{\infty}= \left(\frac{Z_i}{\sum_{j} Z_j}\right).
$$
The r.h.s. above, by  Proposition 6 in \cite{PitmanYor97}, has
Poisson-Dirichlet$(1-\beta,0)$ distribution.
{The above approximation will entail}  the convergence of {the $\vec\eta^N$-directed coalescent process} $\Pi_{n}^N$ to the Poisson-Dirichlet exchangeable coalescent as $N\to\infty$.

\bigskip

\subsection{ Semi-pulled regime.} Let us now consider $\beta_c(\rho)<\beta<1$ and recall that $\chi$ in \eqref{eq:defChi} satisfies $\chi\in (0,1]$ in this case.
For the sake of simplicity, let us again assume for the time being that {the sampling of the indices in Definition \ref{definition-wi}} occurs with replacement and define {$V^N(i)$}
as the average number of {times that index $i$ is drawn}. {Using again the law of large numbers to approximate $w_i\approx 1/i$}, we have

$$
{V^N(i) \ = \ N\E\[\frac{w_i^\beta}{\sum_{j=1}^{N^\rho} w_j^\beta}\]\ \approx \ N \frac{i^{-\beta}}{\sum_{j=1}^{N^\rho} j^{-\beta}}\sim i^{-\beta}N^{\beta\chi}.}
$$

Thus if
$$i<<N^\chi, \ \mbox{then} \ V^N(i)>>1, \ \ \mbox{and if} \ \   \ i>>N^\chi, \ \mbox{then} \ V^N(i)<<1.$$
In other words, all the entries with rank smaller than $N^\chi$ are very likely to be picked (Proposition \ref{prop:chooseFittest}), while an entry with rank higher than $N^\chi$ is likely to be missed.
This implies that 
\begin{eqnarray*}
\forall i \leq N^{\chi}, \  \eta^N_i & = & \frac{w_{\hat I_i^N}}{ \sum_{j=1}^N w_{\hat I_j^N} } 
 \approx  \frac{w_i}{\sum_{i=1}^{N^\chi} w_j + 
\sum_{j = N^{\chi}}^{N} w_{\hat I_j^N}}. 
\end{eqnarray*}
We will further show in Proposition \ref{prop:smallMassTail} that
$$
\sum_{j = N^{\chi}}^{N} w_{\hat I_j^N} << \sum_{i=1}^{N^\chi} w_j
$$
so that 
$$
\forall i \leq N^{\chi}, \  \eta^N_i \approx \frac{w_i}{\sum_{j=1}^{N^\chi} w_j}.
$$

{This suggests that the  the natural selection step becomes equivalent to truncation selection with an ``effective population size" $N^\chi$ as $N\to\infty$.
The second part of Theorem \ref{th:coaConvergence} is then a direct consequence of \cite{BrunetDerrida2012} on truncation selection for an ``effective population size'' $N^\chi$, where the genealogy is proved to converge to the Bolthausen-Sznitman coalescent. This last part of the argument will be detailed in Section \ref{sec:convBSC}.
}

\section{Proofs for the fully-pulled regime.}\label{sec:proofsWeakSelection}
\label{sect:a_pos}

In this section we assume throughout that $\beta<\beta_c(\rho) = 1-\frac{1}{\rho}$, so that
\begin{equation}
 \alpha(\beta,\rho) \equiv \alpha  = \rho - \frac{1}{1-\beta} > 0.
\end{equation}

\subsection{Convergence of mass partitions.}\label{sec:weakConvMassPartitions}
Define ${\hat{\bm V}}^N$ the random vector  
constructed as $\hat{\bm W}^N$ (see Definition \ref{definition-wi}) except that sampling is
done {\it with} replacement.
As for $I^N$, we denote by $J^N= (J_1^N,\cdots, J_N^N)$ 
the vector of sampled indices with replacement arranged in their sampling order, 
and by $\hat J^N$ the permutation {of $J^N$ in decreasing order so that $w_{\hat J_1^N} > \cdots > \hat w_{\hat J_N^N}$ is in increasing order.}

In the following, we write $\set{J}^N$ to refer to the set $\{j\colon \exists i\geq1 \text{ such that }  j=J^N_i\}$; i.e.
we regard $J^N$ as a discrete set and remove repetitions. We define $\set{I}^N$ analogously. 
We can couple  the random mass partitions $\tilde{\bm\eta}^N$ and $\bm\eta^N$ by coupling $J^N$ and $I^N$ as follows. 
Let  $\# \set{J}^N$ be the cardinality of $\set{J}^N$.
Given $\set{J}^N$, sample $r^N:=N - \#\set{J}^N$ indices in the set $[N^\rho]\setminus \set{J}^N$ 
without replacement.
Let $\set{K}^N$ be the resulting set.
We have the following lemma {whose proof is left to the appendix}. 
\begin{lemma}\label{le:indexCoupling}
The 
set of indices  
$\set{J}^N \cup \set{K}^N$ is identical in law to $\set{I}^N$.
\end{lemma}

The main objective of this section is to prove the following result
in which we characterize the limiting distribution of $\bm \eta^N$,
via the above coupling with $\hat{\bm \eta}^N$ 
and the simpler characterization of  $\hat{\bm \eta}^N$ as $N\to\infty$. 
This,
in turn, will readily give Theorem \ref{th:coaConvergence} i).

\begin{proposition}\label{cv:joint}
Assume $\hat{\bm \eta}^N$ and $\bm\eta^N$ are coupled as previously exposed.
Then 
$$\(\tilde{\bm\eta}^N,\bm\eta^N\) \Longrightarrow (\bm\eta^\infty, \bm\eta^{\infty})$$
where each coordinate is endowed with the $\ell^1(\R_+) \equiv \ell^1$ topology (the $L^1$ topology on the set of non-negative real sequences), the joint convergence is meant in the product topology, and 
$\bm\eta^{\infty}$ has the Poisson-Dirichlet $(1-\beta,0)$ distribution.
\end{proposition}

We break up the proof of this proposition into two main intermediary results.  In the following Lemma \ref{ref:lem22222} 
and its Corollary \ref{lemma:VN}, we prove $N^\alpha{\hat{\bm V}}^N\Longrightarrow (Z_1,Z_2,\cdots)$ where $(Z_i)$ is a PPP.
Then, in  Lemma \ref{lemma:approx-coincide}, we prove that 
the first $\ell\in\N$ coordinates of $N^\alpha{\hat{\bm V}}^N$ and $N^\alpha{\hat{\bm W}}^N$ in fact coincide with high probability as $N\to\infty$,
so that also $N^\alpha{\hat{\bm W}}^N\Longrightarrow(Z_1,Z_2,\cdots)$.
The convergence of the normalized sequences $\tilde{\bm\eta}^N$ and $\bm\eta^N$ to $\bm\eta^\infty$ will then directly follow by the Continuous Mapping theorem \cite{Billingsley99}.

\begin{lemma}\label{ref:lem22222}
Let $(Z_j)_{j\geq1}$ be the atoms of a Poisson random measure on {$\R_+$} with intensity
$(1-\beta)x^{\beta-2}dx$ ranked in decreasing order. Then, for the scaled random vector
$N^\alpha\hat{\bm V}^N$ we have, as $N\to\infty$,
\begin{equation*}
\(N^\alpha w_{\hat{J}^N_1}, \dots, N^\alpha w_{\hat{J}^N_N}, 0, \dots\)
\Rightarrow 
\(Z_1, Z_2, \dots\).
\end{equation*}
where the convergence is meant in $\ell^1(\R_+)$.
\end{lemma}

\begin{proof}
We provide an argument along the same lines as Lemmas 20 and 21 in \cite{Schweinsberg2003}. It 
remains to prove the conditions of [Theorem 3.2 \cite{Billingsley99}] which in our setting can be written as
that
\begin{equation}\label{eq:finite:coordinate}
\(N^\alpha w_{\hat{J}^N_1}, \dots, N^\alpha w_{\hat{J}^N_\ell}\)
\Rightarrow 
\(Z_1, \dots, Z_\ell\)
\end{equation}
(convergence for fixed $\ell$), and the uniform distance bound in probability which can be written as
\begin{equation}\label{eq:preRemainder}
 \forall\eps>0,\lim_{\ell\to\infty}\limsup_{N\to\infty}
 \P\(N^\alpha \sum_{k=\ell+1}^N w_{\hat{J}^N_k}  \geq \eps \)= 0.
\end{equation} 

We will prove \eqref{eq:finite:coordinate} conditional on a realization of the sequence $(w_i)_i$ such that $k w_k\to 1$ (note that by Lemma \ref{lem:local-cv}, the set of such realizations has probability $1$).
For any collection of positive
real numbers $\infty=z_0>z_1\geq z_2 \geq \cdots \geq z_\ell$, for $1\leq i\leq \ell$, define 
\begin{eqnarray*}
{\bf L}^N_i :=  \{k\in  J^N \ : \  z_i\leq N^{\alpha} w_k < z_{i-1}\}, \ \ \ \ \mbox{and $L^N_i\coloneqq \# {\bf L}^N_i $.}
\end{eqnarray*}
Here we recall that $J^N$ refers to the indices sampled with replacement.
Let $n_1,\dots,n_\ell \in \N$. Let $\P(\cdot | (w_i)_i)$ be the (regular) probability measure conditioned on the sequence $(w_i)$.
We have the multinomial formula
\begin{align*}
\P\bigg(L^N_1=n_1,\dots,L^N_\ell=n_\ell\  | \ (w_i)_i\bigg)
=&\frac{\(N\)_{n_1+\dots+n_\ell}}{n_1!\dots n_\ell!}
  \frac{\prod_{i=1}^\ell \(\sum_{ k: N^\alpha w_{k}\in [z_i, z_{i-1})  } w_k^{\beta}\)^{n_i}}
       {\(\sum_{k=1}^{N^\rho}w_k^{\beta}\)^{n_1+\dots+n_\ell}}\times\\
  &\(1-\frac{\sum_{ k: N^\alpha w_{k}\in [z_i,z_{i-1}) } \ w_k^{\beta}}{\sum_{k=1}^{N^\rho}w_k^{\beta}}\)^{N-n_1-\dots-n_\ell}.
\end{align*}
Since $k w_k \to1$ as $k\to\infty$, this easily implies, as $N\to\infty$, 

\begin{align*}
\(N\)_{n_1+\dots+n_\ell}\sim N^{n_1+\dots+n_\ell},\\
\sum_{k=1}^{\ceil{N^\rho}}w_k^{\beta} \  \sim \ 
\frac{N^{\rho(1-\beta)}}{(1-\beta)},\\
\( \sum_{ k: N^\alpha w_{k}\in [z_{i},z_{i-1})   } w_k^{\beta} \)^{n_i}
\sim
\( \frac{\(N^\alpha/z_i\)^{(1-\beta)}}{1-\beta}-
  \frac{\(N^\alpha/z_{i-1}\)^{(1-\beta)}}{1-\beta}\)^{n_i}, 
\end{align*}
and 
\begin{align*}
&\lim_{N\to\infty}   \(1-\frac{\sum_{k: N^\alpha w_{k} \geq z_\ell} \ w_k^{\beta}}{\sum_{k=1}^{N^\rho}w_k^{\beta}}\)^{N-n_1-\dots-n_\ell}\\
&=\exp\lbr -\lim_{N\to\infty} (N-n_1-\dots-n_\ell)\frac{N^{\alpha(1-\beta)}z_\ell^{\beta-1}}{N^{\rho(1-\beta)}} \rbr
=e^{-z_\ell^{\beta-1}}.
\end{align*}
As a consequence,
\begin{align*}
&\P\bigg(L^N_1=n_1,\dots,L^N_\ell=n_\ell\  | \ (w_i)_i\bigg)
\\
&\sim \frac{N^{n_1+\dots+n_\ell}}{n_1!\dots n_\ell!}
  \frac{\prod_{i=1}^\ell \( \(N^\alpha/z_i\)^{1-\beta}-\(N^\alpha/z_{i-1}\)^{1-\beta}\)^{n_i}}{N^{\rho(1-\beta)(n_1+\dots+n_\ell)}}
  e^{-z_\ell^{\beta-1}}\\
&=\frac{N^{n_1+\dots+n_\ell}}{n_1!\dots n_\ell!}\frac{\prod_{i=1}^\ell \( z_i^{\beta-1}-\(z_{i-1}\)^{\beta-1}\)^{n_i}}{N^{n_1+\dots+n_\ell}}
  e^{-z_\ell^{\beta-1}}\\
&=\prod_{i=1}^\ell \frac{e^{-(z_i^{\beta-1}-z_{i-1}^{\beta-1})}\( z_i^{\beta-1}-\(z_{i-1}\)^{\beta-1}\)^{n_i}}{n_i!}\\
&=\P\(Z([z_{i}, z_{i-1}))=n_i \text{ for }1\leq i \leq \ell\).
\end{align*}
Thus we have the a.s. limit
\begin{align*}
\lim_{N\to\infty}\P\left(N^\alpha w_{\hat{J}^N_1}\geq z_1, \dots, N^\alpha w_{\hat{J}^N_\ell}\geq z_\ell \ | \ (w_i)_i \right)
&=\lim_{N\to\infty}\P\left( L^N_i \geq i \text{ for }1\leq i\leq \ell \ | \ (w_i)_i \right)\\
&=\P\(Z([z_{i}, z_{i-1}))\geq i \text{ for }1\leq i \leq \ell\) \\
&=\P\(Z_1\geq z_1,\dots, Z_\ell\geq z_\ell\),
\end{align*}
so that \eqref{eq:finite:coordinate} holds by Lemma \ref{lem:local-cv}.

\bigskip

It remains to prove (\ref{eq:preRemainder}). By Markov's inequality,  \eqref{eq:preRemainder} is an easy consequence of the following limit 
\begin{equation}\label{eq:remainder}
\lim_{\ell\to\infty}\limsup_{N\to\infty} \E\[ \(N^\alpha\sum^N_{k=\ell+1} w_{\hat{J}^N_k}\)\wedge 1 \]=0.
\end{equation}
To prove the latter, let $0<\delta<1$. Since $\sum_{k=1}^\infty Z_k<\infty$ a.s. we may choose $\ell$ large enough to ensure 
$\P\(Z_\ell>\delta/2\)<\delta/2$. Having fixed $\ell$, equation
\eqref{eq:finite:coordinate} allows us to
choose $N_1$ large enough to ensure 
$
\P\(N^\alpha w_{\hat{J}_\ell}>\delta\)< \delta
$
for every $N\geq N_1$; thus, using that the sequence $\(N^\alpha w_{\hat{J}^N_\ell}\)_{\ell}$ is
non-increasing in the second line below, we obtain that for $N\geq N_1$
\begin{align*}
\E\[\(\sum_{k=\ell}^N N^\alpha w_{\hat{J}^N_k}\)\wedge 1\]
&\leq \P\(N^\alpha w_{\hat{J}^N_\ell}>\delta\) + \E\[\sum_{k=\ell}^N N^\alpha w_{\hat{J}^N_k};N^\alpha w_{\hat{J}_\ell}\leq\delta\]  \\
&\leq \delta + \E\[\sum_{k=1}^N N^\alpha w_{\hat{J}^N_k} \Ind{N^\alpha w_{\hat{J}_k}\leq \delta}\].
\end{align*}
From the large deviation bounds of Corollary \ref{cor:ldd} (
setting $b=\alpha$ and $a=1$ therein) and Remark \ref{rem:lddReplacement} in the appendix,
 the second term above is of order $\bO(\delta^\beta)$ as $\delta\to0$
for  large enough $N$. Thus \eqref{eq:remainder} follows by first taking $N$ large enough and then $\delta\to0$.

\end{proof}

\begin{corollary}\label{lemma:VN}
As $N\to\infty$, $\hat{\bm \eta}^N\Rightarrow \bm \eta^\infty$, where the convergence is meant in the $\ell^1$ topology.
\end{corollary}

\begin{proof} 
By Proposition 6 in \cite{PitmanYor97}, 
 $\(\frac{Z_1}{\sum_{k=1}^\infty Z_k}, \frac{Z_2}{\sum_{k=1}^\infty Z_k},\dots\)$ has Poisson-Dirichlet$(1-\beta,0)$ distribution.
The result follows from the previous lemma, the fact that the mapping $\nu \to \frac{\nu}{\norm{\nu}}$ is continuous in $\ell^1(\R^+)\setminus \ze$, and the Continuous Mapping theorem \cite{Billingsley99}. 
\end{proof}

In order to prove Proposition 
\ref{cv:joint}, it remains to show that $\eta^N$ and $\tilde \eta^N$ coincide at the limit.
Recall that under our coupling the set of sampled indices ${\cal I}^N$ can be decomposed into two sets  ${\cal J}^N$ and ${\cal K}^N$, where ${\cal J}^N$ is the set obtained by sampling $N$ individual {\it with} replacement.
Lemma \ref{ref:lem22222}  entails 
that elements in ${\cal J}^N$
have ranks of order $N^{-\alpha}$. The next result states that element in ${\cal K}^N$  must have a much smaller order. In particular, this will imply that the leading order terms in $\hat W^N$ and $\hat V^N$ must coincide.

\begin{lemma}\label{lemma:approx-coincide}
 There exists  $\delta>0$ such that 
$$ \P\bigg(\#\lbr k\in {\cal K}^N \colon w_k\geq \frac{1}{N^{\alpha+\delta}}\rbr = 0 \bigg) \to 1 \ \ \ \mbox{as $N\to\infty$}. $$
\end{lemma}
\begin{proof}
{By Markov's inequality and dominated convergence it is sufficient to find $\delta>0$ such  that}
$$
\E\[\#\lbr k\in K^N  \colon w_k \geq \frac{1}{N^{\alpha+\delta}}\rbr \wedge 1\ | \ (w_i)_i\] \to 0   \ \ \mbox{a.s. when $N\to\infty$.}
$$

Recall that $J^N$ denotes the indices sampled with replacement (with possible repetition), and that $\# J^N$ is the number of distinct
indices in $J^N$. Given $J^N$, the probability that $k\notin J^N$
is in ${\cal K}^N$ is always bounded from above by
$\frac{w^\beta_k}{\sum_{j=N}^{ [N^\rho]} w_j^\beta}$.
As a consequence, 
\begin{eqnarray*}
\E\[\#\lbr k\in K^N \colon w_k\geq \frac{1}{N^{\alpha+\delta}}   \rbr \   \ \vert \ (w_i)_i \] 
& \leq &  (N - \# J^N)  \sum_{k : w_k \geq \frac{1}{N^{\delta+\alpha}}} \frac{w^\beta_{k}}{{\sum_{j=N}^{N^\rho} w_j^\beta  }}.
\end{eqnarray*}
Integrating over $J^N$, and using the fact that $k w_k \to 1$ a.s., we have
\begin{align*}
&\limsup_{N\to\infty} \E\[\#\lbr k\in K^N \colon w_k \geq \frac{1}{N^{\alpha+\delta}}  \rbr \ | \ (w_i)_i  \] \\
& \leq   \limsup_{N\to\infty} \frac{1}{ \sum_{j= N}^{N^\rho} w_j^\beta}  \E\bigg[N - \# J^N \ | \ (w_i)_i  \bigg]  \  \sum_{k : w_k \geq \frac{1}{N^{\delta+\alpha}}}  w^\beta_{k} \\
&  \leq  \limsup_{N\to\infty}  \E\bigg[ N - \# J^N \ | \ (w_i)_i\bigg] \frac{N^{(\delta+\alpha)(1-\beta)}}{N^{\rho(1-\beta)} }.
\end{align*}
In order to evaluate the expected value on the r.h.s., we will use the simple bound
$$ N - \# J^N  \ \leq  \ \sum_{(i,j) \in[N] : i<j} 1_{J_i^N = J_j^N}. $$
We use again the fact that the probability that $w_k$
is sampled in $J^N$ at any step of the procedure is  bounded from above by
$\frac{w^\beta_k}{\sum_{j=N}^{[N^\rho]} w_j^\beta}$.
As a consequence,
\begin{eqnarray*}
\E\bigg[ N - \# J^N \ | \ (w_i)_i \bigg] & \leq &  \sum_{(i,j) \in[N] : i<j} \P({J_i^N = J_j^N} \ | \ (w_i)_i) 
 =   \sum_{(i,j) \in[N] : i<j} \sum_{k=1}^{\ceil{N^\rho}} \P\left({J_i^N = J_j^N = k} \ | \ (w_i)_i\right) \\
& \leq &  N^2 \sum_{k=1}^{\ceil{N^\rho}} \frac{w_k^{2\beta}}{ \left(\sum_{i=N}^{N^\rho} w_i^\beta\right)^2}.  \\
\end{eqnarray*}
Combining this with the previous estimates and using again the fact that $k w_k \to 1$, there exists a deterministic constant $c$ such that 
\begin{align*}
&\limsup_{N\to\infty}  \E\[\#\lbr k\in K^N \colon w_k \geq \frac{1}{N^{\alpha+\delta}}  \rbr  \  | \ (w_i)_i \]\\ 
&  \leq  c\limsup_{N\to\infty}  N^2 \(\frac{N^{\rho(1-2\beta)_+}\vee \log(N)}{N^{2\rho(1-\beta)}}\)\(\frac{N^{(\alpha+\delta)(1-\beta)}}{N^{\rho(1-\beta)}}\) \\
& \leq  c\limsup_{N\to\infty}  N \(\frac{N^{\rho(1-\beta)}}{N^{2\rho(1-\beta)}}\) N^{\delta(1-\beta)} 
 =   c \limsup_{N\to\infty} N^{1-\rho(1-\beta)+\delta(1-\beta)}.
\end{align*}

In the fully-pulled regime recall that $1-\rho(1-\beta)<0$. 
Thus, we can choose $\delta>0$ small enough such that the exponent on the r.h.s. remains negative. This completes the proof of the Lemma.
\end{proof}

\begin{proof}[Proof of Proposition \ref{cv:joint}]
We first prove the convergence of the scaled weights $N^\alpha\hat {\bf W}^N$ to $(Z_1, Z_2,\dots)$ in $\ell^1$.
As in Lemma \ref{ref:lem22222}, it is enough to prove 
\begin{equation}
\(N^\alpha w_{\hat{I}^N_1}, \dots, N^\alpha w_{\hat{I}^N_\ell}\)
\Rightarrow 
\(Z_1, \dots, Z_\ell\),
\end{equation}
for fixed $\ell$, and also the analogue of \eqref{eq:preRemainder}. 

Let $\eps>0$.
From Lemma \ref{lemma:approx-coincide}, under our coupling and as $N\to\infty$, we have
\begin{align}\label{eq:symm-diff}
&\P\bigg( \forall i \ \mbox{s.t.} \  N^\alpha w_{\hat J^N_i} > \eps \ : \ w_{\hat J^N_i} = w_{\hat I^N_i}   \bigg)\nonumber\\
&\geq \P\bigg(\#\lbr k\in {\cal K}^N \colon w_k\geq \frac{1}{N^{\alpha}\eps}\rbr = 0\bigg)
\to 1 \ \ \mbox{as $N\to\infty$}
\end{align}
so that
the sampled atoms $(w_{I^N_i})$ and $(w_{J^N_i})$ above level $N^{-\alpha} \eps$  coincide with a probability going to $1$ as $N\to\infty$ (see discussion preceding Lemma \ref{lemma:approx-coincide}).

From  Lemma \ref{ref:lem22222}, this implies that for every $z_{1},\cdots, z_\ell>0$, we have
$$
\P\(N^\alpha w_{\hat{I}^N_1}>z_1, \dots, N^\alpha w_{\hat{I}^N_\ell} >z_\ell \)
\Rightarrow
\P\(Z_1 > z_1,\dots,Z_\ell >z_\ell\).
$$
For the analogue of \eqref{eq:preRemainder} we let the reader convince himself 
that it can be proved using the same argument.   
Finally, from (\ref{eq:symm-diff})  we obtain that
the normalized sequences 
$(\tilde {\bm \eta}^N, \bm\eta^N)$ jointly converge to $(\bm\eta^\infty, \bm\eta^\infty)$.
\end{proof}

\begin{proof}[Proof of Theorem \ref{th:coaConvergence} i).]
Recall that $\Pi_n^N$ is an exchangeable coalescent conducted by the mass partition $\eta^N$. Proposition \eqref{cv:joint} provides the convergence of the mass partition
\begin{equation}\label{eq:PDconvMassPartitions}
{\bm \eta}^N \Rightarrow  {\bm \eta}^\infty
\end{equation}  in the $\ell^1$ topology. The convergence of $\Pi_n^N$ to the $n$-Poisson-Dirichlet coalescent follows by a direct application of [Theorem 2.1 i) \cite{MohleSagitov2001}]. 

(More precisely, let $\pi^N$ (resp. $\pi^\infty$) be the exchangeable partition of $\N$ generated by Kingman's paintbox procedure on the mass partition $\bm\eta^N$ (resp. $\bm\eta^\infty$). An application of 
Proposition 2.9  \cite{Bertoin2006} and \eqref{eq:PDconvMassPartitions} then give
 $\pi^N\Rightarrow \pi^\infty$. This entails condition (8) in \cite{MohleSagitov2001} which allows to apply  [Theorem 2.1 i) \cite{MohleSagitov2001}].)

We now prove convergence of the coalescence time  $\E\[T_2^N\]$ for a sample of size $2$. 
$T_2^N$ is a geometric r.v. (counting the number of trials) with parameter
\begin{equation}
c_N = \E\left[\sum_{i=1}^N (\eta_i^N)^2\right]. 
\end{equation}
Since the function $({\bf x}_1, {\bf x}_2, \dots) \to \sum_{i=1}^\infty \({\bf x}_i\)^2$ is continuous
and 
bounded in the unit ball of $\ell^1(\R^+)$, by Proposition \ref{cv:joint}, we get
\begin{equation}\label{ET2}
\lim_{N\to\infty} \E\[T_2^N\] = \lim_{N\to\infty}c_N^{-1}=\E\[\sum_{i=1}^\infty \(\eta_i^\infty\)^2\]^{-1}.
\end{equation}
By equation (6) in \cite{PitmanYor97},
\begin{align*}
\E\[\sum_{i=1}^\infty \(\eta_i^\infty\)^2\] &= \frac{1}{B(\beta,1-\beta)}\int_0^1 u^{1+\beta-1} (1-u)^{1-\beta-1} du
=\frac{B(1+\beta,1-\beta)}{B(\beta,1-\beta)}
=\beta.
\end{align*}
Similarly, 
\begin{align}\label{eq:RMRCApmf}
\P\(R^N_2 = i\) 
                = \frac{\E\[\(\eta_i^N\)^2\]}{c_N}.
\end{align}
Taking the limit as $N\to\infty$ as before, we obtain
$$
\lim_{N\to\infty} \P\(R^N_2 = i\) = \E\[\(\eta_i^\infty\)^2\]/\beta.
$$

\end{proof}

Finally, we prove the convergence in distribution of $\tilde R_1^N$ in the weak selection regime.
\begin{proof}[Proof of Theorem \ref{thm:typical} i)]
Observe that $\P\(\tilde R_1^N(t)=k\)=\E\[\eta^N_k\]$ which by \eqref{eq:PDconvMassPartitions}  converges to $\E\[\eta^\infty_k\]$.
\end{proof}

\subsection{Speed of evolution.}
\begin{proof}
[Proof of Theorem \ref{th:speedSelection} i)]
Recall from \eqref{eq:speedSelection} that
\begin{equation*}
\nu^N=\E\[\log\(\sum_{k=1}^N w_{I_k}\)\] = -\alpha \log(N) + \E\[\log\(\sum_{k=1}^N N^\alpha w_{I^N_k}\)\].
\end{equation*}
By Proposition \ref{cv:joint}, 
$$
\log\(\sum_{k=1}^N N^\alpha w_{I^N_k}\) \Longrightarrow \ \log\(\sum_{k=1}^\infty Z_k\)
$$
where 
$
{\mathcal S}_1:=\sum_{k=1}^\infty Z_k
$

has Laplace transform
$\E\[e^{-s{\mathcal S}_1}\]=\exp\lbr-\rho\beta s^{1-\beta}\rbr$ (see (12) in \cite{PitmanYor97}).
As a consequence, the proof is complete by a direct application 
of Proposition \ref{lem:log-cond} in the appendix where we show the uniform integrability of the collection of random variables $\left(\log\(\sum_{k=1}^N N^\alpha w_{I^N_k}\)\right)_{N}$.
\end{proof}


\section{Proofs for the semi-pulled regime.}\label{sec:proofsStrongSelectino}


In this section we assume throughout that $\beta>\beta_c(\rho) = 1-\frac{1}{\rho}$, and recall that then
\begin{equation*}
\chi\equiv \chi(\beta,\rho)=\frac{1-\rho(1-\beta)}{\beta}\in(0,1].
\end{equation*}

\subsection{Selection of the fittest individuals.}\label{sec:strongRegimeFittest}

In this section we prove that all the high fitness children positioned above level $\approx N^{-\chi}$ 
are all chosen during the selection step with probability converging to 1 as $N\to\infty$ 
(Proposition \ref{prop:chooseFittest}); whereas the ``aggregated''  fitness 
of individuals below level $\approx N^{-\chi}$ becomes negligible as $N\to\infty$
(Proposition \ref{prop:smallMassTail}). We will use these two heuristics in Section \ref{sec:convBSC} 
in order to ``reduce'' the dynamics of our model
to those of the original exponential model of Brunet, Derrida et al \cite{BrunetDerridaMullerMunier2006,BrunetDerridaMullerMunier2007,BrunetDerrida2012} with an effective population size equal to
$N^\chi$. \par

\begin{proposition}\label{prop:chooseFittest}
Let $0<\eps<\chi$. Define the event
$$ A_{N,\eps}\equiv A_N \ := \  \left\{ \lbr i \ : \ w_i \geq \frac{1}{N^{\chi-\eps}} \rbr \subseteq I^N  \right\},$$
then 
{$$\exists \eta>0\colon\quad \lim_{N\to\infty} N^\eta\P\left( A_{N,\eps}^c  \right) = 0.$$}

\end{proposition}
\begin{proof}

For every $i\in [N^\rho]$ \begin{align*}
\P\bigg(i \notin I^N \ \vert \ (w_i)_i\bigg) 
\ \leq \ \exp\lbr N \ln\(1- \frac{w_i^\beta}{\sum_{j=1}^{\lceil N^\rho\rceil} w_j^\beta} \) \rbr \ 
\ \leq \ \exp\lbr - \frac{N w_i^\beta}{ \sum_{j=1}^{\lceil N^\rho\rceil} w_j^\beta} \rbr.
\end{align*}

Summing over $i$ we obtain
\begin{align}\label{eq:fitIneq1}
\P\left( A_{N,\eps}^c  \  \vert \  (w_i)_i \right)  \leq \sum_{  i \ : \ w_i \geq \frac{1}{N^{\chi-\eps}}} \exp(-\frac{ N w_i^\beta}{ \sum_{j =1}^{\lceil N^\rho\rceil} w_j^\beta}).
\end{align}

Using the large deviation bound of Lemma \ref{lem:LLD}, for every $C>1-\beta$
we have
$$
{
\forall \eta>0,\quad
\P\left( \sum_{j=1}^{\lceil N^\rho\rceil}  w_j^\beta\leq  C^{-1} N^{\rho(1-\beta)} \right)=\lo{N^{-\eta}}.}
$$

Gathering the previous observations, we need to show that as $N\to\infty$, we have
$$
{\exists \eta>0\colon\quad N^\eta \ \E \left[   \sum_{ w_i \geq \frac{1}{N^{\chi-\eps}}  } \exp\left( - C\frac{ N w_i^{\beta}}{  N^{\rho(1-\beta)} } \right)    \right] \ \to \  0.}
$$
Indeed, as $N\to\infty$, 
\begin{align*}
 \E \left[ \sum_{  w_i  \geq \frac{1}{N^{\chi-\eps}}  }  \exp\left( -C \frac{ N w_i^{\beta}}{  N^{\rho(1-\beta)} } \right)    \right]
& = \int_{  \frac{1}{N^{\chi-\eps} }}^{\infty} \exp(-C  \frac{N x^{\beta}}{ N^{\rho(1-\beta)}}) \frac{dx}{x^2} \\
& =  \int_{  \frac{1}{N^{\chi-\eps} }}^{\infty} \exp(-C  N^{\beta \chi} x^{\beta}) \frac{dx}{x^2} \\
& =  N^{\chi} \int_{N^\eps}^\infty \exp(-C y^\beta) \frac{dy}{y^2} 
{=\bO(N^{\chi-\eps} e^{-CN^{\eps\beta}})}.
\end{align*}
\end{proof}

\begin{corollary}\label{cor:chooseFittest}
Let 
$$
\bar A_N  \coloneqq \lbr \sum_{k=1}^{N} w_{I^N_k} >\frac{\chi}{2}\log(N) \rbr
$$
then
$$
\lim_{N\to\infty} \log(N)\P(\bar A_N^c) = 0.
$$
\end{corollary}
\begin{proof}
By {Proposition} \ref{prop:chooseFittest} we may work on the set $A_N$ on which we have 
$\sum_{k=1}^{N} w_{I^N_k}\geq \sum_{w_i \geq \frac{1}{N^{\chi-\eps}}} w_i$. 
Let $a_N = \frac{\chi}{2}\log(N)$.
Then
by Campbell's formula combined with Chebyshev's inequality, 
\begin{align}\label{eq:sumetasbound2}
\lim_{N\to\infty}\log(N)\P\left(\bar A_N^c \right) & =  
\lim_{N\to\infty}\log(N)\P\left(  \sum_{i: w_i \geq \frac{1}{N^{\chi-\eps}}} w_i \leq
a_N \right) \nonumber \\
& \leq  \lim_{N\to\infty}
\log(N)  \exp(a_N) \E\bigg[\exp(- \sum_{i: w_i \geq \frac{1}{N^{\chi-\eps}}} w_i) \bigg] \nonumber\\
& =  \lim_{N\to\infty} \log(N) \exp(a_N) \exp\left(\int_{\frac{1}{N^{\chi-\eps}}}^\infty \frac{dx}{x^2}(\exp(-x)-1)\right) \nonumber\\
& =  \lim_{N\to\infty} \log(N) \exp(a_N) \exp\bigg( -\log(N^{\chi-\eps}) + \bO\(1\)  \bigg).
\end{align}

Take  $\eps$ small enough such that $\chi-\eps > \frac{\chi}{2}$, it is straightforward to check that the last term above converges to $0$ as $N\to\infty$. 
\end{proof}
We now prove that the ``aggregated'' fitness 
of individuals below level $\approx N^{-\chi}$ becomes negligible as $N\to\infty$.
\begin{proposition}\label{prop:smallMassTail}
Let $\eps\in(0,1)$. Define
$$
B_{N,\eps}:=\lbr \sum_{k=1}^N w_{I^N_k}\Ind{I^N_k > N^{(\chi + \eps)}} < N^{-\eps\beta/2} \rbr.
$$
Then
{
\begin{equation}\label{eq:probSmallTail}
 \lim_{N\to\infty} N^{-\eps\beta/2}  \P\(B_{N,\eps}^c \) = 0.
\end{equation}
}

\end{proposition}
\begin{proof}

By Lemma \ref{le:pppsIdentities} and
 standard  Cramer's estimates, the probability
of the event 
$\lbr w_{\ceil{N^{\chi+\eps}}} \geq \frac{(1+\eps)^{-1}}{\lceil N^{\chi+\eps}\rceil}\rbr={\{ \rexp_1+\cdots\rexp_{\lceil N^{\chi+\eps}\rceil}\leq (1+\eps)\lceil N^{\chi+\eps}\rceil\}}$
decays exponentially fast in $N$.
This implies
$$\E\bigg [\sum_{k=1}^N w_{I^N_k}\Ind{I^N_k>N^{(\chi + \eps)}}\bigg]
\leq \E\bigg[ \sum_{k=1}^N w_{I^N_k}\Ind{w_{I^N_k}< \frac{(1+\eps)^{-1}}{N^{(\chi + \eps)}}}\bigg]+\lo{N^{-\eta}}
$$
for every $\eta>0$.
Also, a direct application of Corollary \ref{cor:ldd} with $a=1$, $b=\chi+\eps$, and $\delta=(1+\eps)^{-1}$, gives
$$
\E\bigg[ \sum_{k=1}^N w_{I^N_k}\Ind{w_{I^N_k}< \frac{(1+\eps)^{-1}}{N^{(\chi + \eps)}}}\bigg]
\ = \ \bO\(\frac{1}{N^{\beta\eps}}\). 
$$
Then Proposition \ref{prop:smallMassTail} is completed by a direct application of Markov's inequality.
\end{proof}


\subsection{Convergence to the Bolthausen-Sznitman Coalescent.}\label{sec:convBSC}

{
Our next Proposition \ref{prop:mc}, which is a simple corollary to Proposition 3.1 in \cite{WencesJegousseJOMB2024},
gives a general criterion for the weak convergence to 
$\Lambda$-coalescents of discrete-time coalescent processes directed by a mass partition $(\eta_1^N,\cdots, \eta^N_N)$ . It is also a simplification of Lemma 7.5 in \cite{CortinesMallein2017} that is based on similar heuristics: equation \eqref{eq:nomultcoag2} below ensures that no simultaneous multiple mergers occur in the limit; 
whilst \eqref{eq:singlCoae1} ensures that the simple multiple mergers converge to those of the $\Lambda$-coalescent. 
We will  apply this criterion to prove convergence to the Bolthausen-Sznitman coalescent when the directing random mass partition has the form \eqref{def:mass-partition} and $\beta\in(\beta_c(\rho),1)$.
}

\begin{proposition}\label{prop:mc}
Let $(\eta_1^N,\cdots, \eta^N_N)$ be (a general) random mass partition.
Assume that  there exist $L_N, N\in\N,$ such that  

\begin{equation}\label{eq:nomultcoag2}
{\lim_{N\to\infty} L_N\sum_{i=2}^N 
\E\[\(\eta_{i}^N\)^{2}\] = 0.}
\end{equation}
Whereas for $b\geq2$,
\begin{equation}\label{eq:singlCoae1}
\lim_{N\to\infty} L_N\sum_{i=1}^N \E\[\(\eta_i^N\)^{b}\] = \int_0^1 x^{b-2}\Lambda(dx).
\end{equation}
Then $c_N \sim L_{N}^{-1}\Lambda([0,1])$ and the time-scaled process $(\Pi^N_n([L_Nt]); t\in \R_{+})$ 
converges in distribution on $D([0,\infty), \PS{n})$  to the $n-\Lambda$-coalescent.
\end{proposition}

The following Lemma \ref{coro:etabsto0} proves
 \eqref{eq:nomultcoag2} for our model; whereas 
 Lemmas \ref{lem:brunet-derrida} and \ref{lem:cv-poluy} prove \eqref{eq:singlCoae1} for $\Lambda(dx)=dx$. These,
together with Proposition \ref{prop:ExpModelMultCoa}, will imply the convergence of the
genealogy to the Bolthausen-Sznitman coalescent.

\begin{lemma}\label{coro:etabsto0}
Recall $\bm\eta^N$ defined as in \eqref{def:mass-partition}. Then, for every $b\geq 2$,
$$
\lim_{N\to\infty}\log(N)\E\[\sum_{i=2}^N \(\eta_i^N\)^b\] = 0.
$$
\end{lemma}
\begin{proof}
Since $\eta_i\leq w_i / \sum_{k=1}^N w_{I^N_k}$ for all $1\leq i \leq N$, by Corollary
\ref{cor:chooseFittest} we have, for every $0<\eps<{1}$, 
\begin{align*}
\E\[\sum_{i=2}^N \(\eta_i^N\)^b\] &\leq
\E\[\sum_{i=2}^N \(\eta_i^N\)^{2-\eps}\]\\ 
&\leq\E\[\sum_{i=2}^N \frac{w_i^{2-\eps}}{\(\chi\log(N)/2\)^{2-\eps}}\]+ \lo{\log(N)^{-1}}\\
&=\frac{\sum_{i=2}^N \E\[w_i^{2-\eps}\]}{\(\chi\log(N)/2\)^{2-\eps}} + \lo{\log(N)^{-1}}.
\end{align*} 
Recall that $1/w_i$ is identical in law with $\rexp_1+\dots +\rexp_i$
where $\(\rexp_i\)_{i\geq 1}$ are independent exponential r.v.s. Therefore, {using Stirling's approximation in the last asymptotic below,}
$$
\E\[w_i^{2-\eps}\] 
= \frac{\Gamma(i-(2-\eps))}{\Gamma(i)} \overset{i\to\infty}{\sim} i^{-(2-\eps)},  
$$  
and $\sum_{i=2}^N \E\[w_i^{2-\eps}\]<\infty$. This proves
$$
\E\[\sum_{i=2}^N \(\eta_i^N\)^b\] = \lo{\log(N)^{-1}}.
$$
\end{proof}

As mentioned at the beginning of section \ref{sec:strongRegimeFittest}, our proof heuristic rests on an approximation of our model by the original Brunet-Derrida model \cite{BrunetDerridaMullerMunier2006,BrunetDerridaMullerMunier2007,BrunetDerrida2012} with an ``effective'' population size equal to
$N^\chi$. 

The following Lemma \ref{lem:brunet-derrida} 
proves \eqref{eq:singlCoae1} in the Brunet-Derrida setting with
population size $N$, while also allowing for an additional error term $a_N$ 
in the normalization constant $\sum_{k=1}^{N} w_k$. The result for $a_N=0$ was shown in Eq. (29) in \cite{BrunetDerrida2012}; 
 we extend these asymptotics to the case $a_N\to0$. Then in Lemma \ref{lem:cv-poluy} we use this result
with $N=N^\chi$
to prove \eqref{eq:singlCoae1} for our model; i.e. for the 
weights $\(w_{I_i}\)_{1\leq i\leq N}$ and the corresponding
mass partition $\bm\eta^N$.

\begin{lemma}\label{lem:brunet-derrida}
Let $a_N\to 0$ a non-negative deterministic constant converging to $0$ as $N\to\infty$. Then as $N\to\infty$
\begin{equation*}
\lim_{N\to\infty} \log(N) \ \E\[\sum_{j=1}^{N} \(\frac{w_j}{a_N+ \sum_{k=1}^{N} w_k}\)^b\] 
= \frac{1}{b-1}. 
\end{equation*}
\end{lemma}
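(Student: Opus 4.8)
The strategy is to show that, after renormalisation by $\log N$, only the ``high-fitness'' atoms $w_j$ with $w_j$ of order at least a negative power of $N$ contribute, and that on that range the sum behaves like a deterministic integral against the intensity $dx/x^2$ of $\Xi$. First I would invoke Corollary \ref{cor:chooseFittest}: on the event $\bar A_N = \{\sum_{k=1}^N w_{I^N_k} > \tfrac{\chi}{2}\log N\}$, which satisfies $\log(N)\P(\bar A_N^c)\to 0$, the denominator $a_N + \sum_{k=1}^N w_k \geq a_N + \sum_{k=1}^N w_{I^N_k}$ is bounded below by $\tfrac{\chi}{2}\log N$; combined with the trivial bound $\sum_{j=1}^N (w_j/(a_N+\sum_k w_k))^b \leq 1$ and a moment estimate on $w_1$ to control the contribution of $\bar A_N^c$ (cf.\ Lemma \ref{lem:27}), this reduces the problem to understanding $\E[\sum_{j=1}^N (w_j / S_N)^b ; \bar A_N]$ where $S_N := a_N + \sum_{k=1}^N w_k$.

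Next I would localise $S_N$. By Lemma \ref{lem:local-cv}, $kw_k \to 1$ a.s., so $\sum_{k=1}^N w_k \sim \log N$ a.s.; more precisely one should show $S_N / \log N \to 1$ in probability (the $a_N = o(\log N)$ term being negligible), with enough control — e.g.\ via the Cramér-type estimates already used in Lemmas \ref{lem:LLD}, \ref{lem:LLD2} applied to $\sum w_k = \sum (\mathcal E_1 + \cdots + \mathcal E_k)^{-1}$ — to conclude that the events $\{|S_N/\log N - 1| > \epsilon\}$ have probability $o(1/\log N)$. On the complementary good event, $S_N$ is pinned to $(1\pm\epsilon)\log N$, so
\begin{equation*}
\log(N)\,\E\Big[\sum_{j=1}^N \Big(\frac{w_j}{S_N}\Big)^b\Big] \ = \ (1+o(1))\,\frac{\log N}{(\log N)^b}\,\E\Big[\sum_{j=1}^N w_j^b\Big] \ = \ (1+o(1))\,\frac{\E\big[\sum_{j=1}^N w_j^b\big]}{(\log N)^{b-1}}.
\end{equation*}
Finally, since $b \geq 2$ one has $w_j^b$ summable and $\E[\sum_{j\geq 1} w_j^b] = \int_0^\infty x^{b-2}\,dx$ is divergent at infinity but the truncation at the $N$-th atom, i.e.\ at level $x \approx 1/N$, gives $\E[\sum_{j=1}^N w_j^b] \sim \int_{\approx 1/N}^\infty x^{b-2}\,dx \cdot$ — wait, that integral converges at $\infty$ for $b<1$; for $b\geq 2$ instead one uses $\E[w_j^b] = \Gamma(j-b)/\Gamma(j) \sim j^{-b}$ as $j\to\infty$ (exactly as in the proof of Lemma \ref{coro:etabsto0}), which is summable, so $\E[\sum_{j=1}^N w_j^b] \to \sum_{j\geq 1}\Gamma(j-b)/\Gamma(j) < \infty$. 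Hmm — this would give the limit $0$, not $1/(b-1)$, so the localisation of $S_N$ at $\log N$ must be supplemented: the point is that the relevant scale is $S_N$ itself, and one should rather write $\sum_j (w_j/S_N)^b$ and use $\E[\sum_{j=1}^N w_j^b / S_N^b]$ with $S_N \sim \log N$ only to extract the power $(\log N)^{1-b}$, while the \emph{constant} $1/(b-1)$ comes from the integral representation $\E[\sum_j (w_j/S_N)^b] = \E[S_N^{-b}\sum_j w_j^b]$ evaluated via a Poisson/Campbell computation on $\Xi$ restricted to atoms above $1/N$: there $\sum_{w_j > 1/N} w_j^b \approx \int_{1/N}^{O(\log N)} x^{b-2}dx \sim \frac{(\log N)^{b-1}}{b-1}$, which cancels the $(\log N)^{b-1}$ from $S_N^{-b}\cdot\log N$ and yields the constant $\tfrac{1}{b-1}$.

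The main obstacle is precisely this last point: making rigorous that $\sum_{j=1}^N w_j^b$ is dominated by the largest atoms and comparable to $\tfrac{1}{b-1}(\log N)^{b-1}$ — equivalently, correctly identifying which scale of $w_j$ drives the sum — and controlling the ratio $(\sum_j w_j^b)/S_N^b$ jointly rather than separately, since $w_j^b$ and $S_N$ are strongly correlated (both are dominated by $w_1, w_2, \dots$). I expect the cleanest route is to condition on $\Xi$, use Campbell's formula for the conditional expectation of $\sum_j w_j^b$ and the a.s.\ asymptotics $\sum_{j\le N} w_j \sim \log N$ and $\sum_{j \le N} w_j^b \sim \tfrac{1}{b-1}(\log N)^{b-1}$ (the latter following again from $w_j \sim 1/j$ by Riemann-sum/Abel-summation arguments), then remove the conditioning using the large-deviation bounds of Lemmas \ref{lem:LLD}--\ref{lem:LLD2} to upgrade a.s.\ convergence to convergence of the $\log N$-scaled expectation, with Lemma \ref{lem:27}-style estimates disposing of the bad events.
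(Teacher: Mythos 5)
There is a genuine gap, and it sits exactly where you flagged your own unease. The constant $\tfrac{1}{b-1}$ is not produced by the typical behaviour of the atoms; it is produced by a rare event, and a route of the form ``a.s.\ asymptotics plus uniform integrability'' cannot see it. Almost surely $w_j\sim 1/j$, so $\sum_{j\le N}w_j^b$ converges a.s.\ to the finite random variable $\sum_{j\ge1}w_j^b$ (consistent with your own correct remark that $\E\left[w_j^b\right]\sim j^{-b}$ is summable for $j>b$), and hence
\begin{equation*}
\log(N)\sum_{j=1}^N\Big(\frac{w_j}{a_N+\sum_k w_k}\Big)^b \ \sim\ \log(N)\,\frac{O(1)}{(\log N)^b}\ \longrightarrow\ 0 \quad\text{a.s.}
\end{equation*}
The expectation nevertheless tends to $\tfrac{1}{b-1}$ because of the heavy tail of $w_1=1/\mathcal{E}_1$: writing $\eta_1=w_1/\sum_k w_k$ and $R=\sum_{k\ge2}w_k\approx\log N$, one has $\P(\eta_1>u)\approx\P\big(w_1>\tfrac{u}{1-u}R\big)\approx\tfrac{1-u}{u\log N}$, so $\E\left[\eta_1^b\right]\approx\frac{1}{\log N}\int_0^1 bu^{b-2}(1-u)\,du=\frac{1}{(b-1)\log N}$. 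The entire limit lives on an event of probability $O(1/\log N)$, which any argument that first localises numerator and denominator at their typical values and then upgrades to convergence of expectations necessarily discards — it would return $0$. Your attempted repair, the claim that $\sum_{j\le N}w_j^b\sim\tfrac{1}{b-1}(\log N)^{b-1}$ a.s.\ ``by Riemann summation from $w_j\sim1/j$'', is false and contradicts your earlier observation: $w_j\sim1/j$ gives $\sum_{j\le N}j^{-b}=O(1)$. (It is true that $\E\left[\sum_{j}w_j^b\Ind{w_j\le \log N}\right]=\int_0^{\log N}x^{b-2}dx\sim\tfrac{(\log N)^{b-1}}{b-1}$ by Campbell's formula, but this expectation is itself carried by the same rare event $\{w_1\gtrsim\log N\}$, so it cannot be paired with a separate in-probability statement about $S_N$; numerator and denominator must be treated jointly, e.g.\ via $S^{-b}=\frac{1}{\Gamma(b)}\int_0^\infty t^{b-1}e^{-tS}\,dt$ together with Campbell's formula, which is essentially how eq.\ (29) of Brunet--Derrida is derived.) Note also that $\E\left[\sum_{j=1}^N w_j^b\right]=+\infty$ for $b\ge1$ since $\E\left[w_1^b\right]=\int_0^\infty x^{-b}e^{-x}dx=\infty$, so the numerator and denominator cannot be decoupled in expectation at all.

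For comparison, the paper does not reprove the core asymptotic: it quotes eq.\ (29) of Brunet--Derrida (2012) for the case $a_N=0$ and only shows that the perturbation by $a_N$ is harmless, by factoring the summand as $\big(1+\tfrac{a_N}{\sum_k w_k}\big)^{-b}$ times the unperturbed one and checking via Markov and Jensen that $\P\big(\tfrac{a_N}{\sum_k w_k}>\eps\big)=\lo{\log(N)^{-1}}$. Your handling of the $a_N$ term (absorb it into $S_N$ and use $a_N=o(\log N)$) is fine in spirit; the defect is entirely in the attempted from-scratch proof of the $a_N=0$ case.
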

\begin{proof}

First observe that
$$
\E\[\sum_{j=1}^N\(\frac{w_j}{a_N + \sum_{k=1}^N w_k}\)^b\]
=\E\[\(\frac{1}{\frac{a_N}{\sum_{k=1}^N w_k}+1}\)^b\sum_{j=1}^N\(\frac{w_j}{ \sum_{k=1}^N w_k}\)^b\].
$$

Second, by Markov's inequality followed by Jensen's inequality, 
\begin{align*}
\P\(\frac{a_N}{\sum_{k=1}^N w_k}>\eps\)
&\leq \frac{1}{\eps}a_N \E\[\frac{1}{\sum_{k=1}^N w_k}\]
=  \frac{1}{\eps}\frac{a_N}{\sum_{k=1}^N k^{-1}} \E\[\frac{\sum_{k=1}^N k^{-1}}{\sum_{k=1}^N w_kkk^{-1}}\]\\
&\leq \lo{\log(N)^{-1}} \E\[ \frac{\sum_{k=1}^N w_k^{-1}k^{-2}}{\sum_{k=1}^N k^{-1}} \]
=\lo{\log(N)^{-1}} \frac{\sum_{k=1}^N kk^{-2}}{\sum_{k=1}^N k^{-1}},
\end{align*}
where $\E\[w_k^{-1}\]=k$. 
Thus, intersecting with the set $\lbr \frac{a_N}{\sum_{k=1}^N w_k}\leq\eps\rbr$
and using (29) in \cite{BrunetDerrida2012} (i.e. our lemma for $a_N=0$) we obtain, for every $\eps>0$, 
\begin{align*}
&\lim_{N\to\infty}\log(N)\E\[\(\frac{1}{\frac{a_N}{\sum_{k=1}^N w_k}+1}\)^b\sum_{j=1}^N\(\frac{w_j}{ \sum_{k=1}^N w_k}\)^b\]\\
&\geq  \lim_{N\to\infty} \frac{\log(N)}{\(\eps+1\)^b}\E\[\sum_{j=1}^N\(\frac{w_j}{ \sum_{k=1}^N w_k}\)^b\] + 
\lim_{N\to\infty} \log(N)\P\(\frac{a_N}{\sum_{k=1}^N w_k}>\eps\)\\
&= \(\frac{1}{\(\eps+1\)^b}\)\frac{1}{b-1}.
\end{align*}
We then obtain an inequality by taking $\eps\to0$. The reversed inequality follows from the fact that it holds for $a_N=0$ thanks to (29) in \cite{BrunetDerrida2012}.
\end{proof}

\begin{lemma}\label{lem:cv-poluy}
Let $b\geq2$. Then for $\bm\eta^N$ as in \eqref{def:mass-partition},
$$
\lim_{N\to\infty}\chi\log(N)\E\[\(\eta_1^N\)^b\] = \int_0^1 bx^{b-2}(1-x)dx = \frac{1}{b-1}.
$$
\end{lemma}
\begin{proof}
Let $\eps>0$. Let $A_{N,\eps}$ be as in Proposition \ref{prop:chooseFittest} and $B_{N,\eps}$ as in Proposition \ref{prop:smallMassTail}.
Recall that for every $\eps>0$,
\begin{equation}\label{eq:ref-con}
\lim_{N\to\infty} \ \log(N) \P( A^c_{N,\eps}), \  \log(N) \P( B^c_{N,\eps}) \ = \ 0.
\end{equation}
As a consequence, it is sufficient to prove that
\begin{equation}\label{eq:LNeta1-1}
\lim_{N\to\infty}\chi\log(N)\E\[\(\eta_1^N\)^b; A_{N,\eps}\cap B_{N,\eps}\] = \frac{1}{b-1}, \quad \forall b\geq 2.
\end{equation}
On the one hand observe that in the event $A_{N,\eps}\cap B_{N,\eps}\subset A_{N,\eps}$ we have
$$
\(\eta_1^N\)^b
\leq \sum_{j=1}^{\lceil N^{\chi-\eps}\rceil} \(\frac{w_j}{\sum_{k=1}^{\lceil N^{\chi-\eps}\rceil}w_k}\)^b
$$
so that taking expectations, Lemma \ref{lem:brunet-derrida} implies that
\begin{align*}
&\limsup_{N\to\infty}\chi\log(N)\E\[(\eta_1^N)^b; A_{N,\eps}\cap B_{N,\eps}\] \\
&\leq \limsup_{N\to\infty}\chi\log(N)\E\[\sum_{j=1}^{\lceil N^{\chi-\eps}\rceil}\(\frac{w_j}{\sum_{k=1}^{\lceil N^{\chi-\eps}\rceil}w_k}\)^b\]
=\frac{\chi}{\chi-\eps}\(\frac{1}{b-1}\).
\end{align*}

On the other hand note that by Lemma \ref{coro:etabsto0} and for $\eps>0$ such that $\chi+\eps < 1$ we have 
for $b\geq2$,
$$
\liminf_{N\to\infty}\chi\log(N)\E\[(\eta_1^N)^b\] = \liminf_{N\to\infty}\chi\log(N)
\E\[\sum_{k=1}^{\lceil N^{\chi+\eps}\rceil}(\eta_k^N)^b\].
$$
Intersecting again with the set $A_{N,\eps}\cap B_{N,\eps}\subset B_{N,\eps}$, we see that the r.h.s. above is greater than
\begin{align*}
\liminf_{N\to\infty}\chi\log(N)\E\[\sum_{j=1}^{\lceil N^{\chi+\eps}\rceil}\(\frac{w_j}{\sum_{k=1}^{\lceil N^{\chi+\eps}\rceil}w_k+N^{-\eps\beta/2}}\)^b;  A_{N,\eps}\cap B_{N,\eps}\].
\end{align*}
Since $\log(N)\P(A^c_{N,\eps}), \log(N)\P(B^c_{N,\eps})\to 0$ and the sum above is bounded by $1$, we thus obtain
\begin{align*}
&\liminf_{N\to\infty}\chi\log(N)\E\[\sum_{k=1}^{\lceil N^{\chi+\eps}\rceil}(\eta_k^N)^b;  A_{N,\eps}\cap B_{N,\eps}\] \\
&\geq \liminf_{N\to\infty}\chi\log(N)\E\[\sum_{j=1}^{\lceil N^{\chi+\eps}\rceil}\(\frac{w_j}{\sum_{k=1}^{\lceil N^{\chi+\eps}\rceil}w_k+N^{-\eps\beta/2}}\)^b\].
\end{align*}
By Lemma \ref{lem:brunet-derrida}, the r.h.s. above converges to $\frac{\chi}{\chi+\eps}\(\frac{1}{b-1}\)$.
The proof
is finished by taking $\eps\to0$ 
in the above two bounds for $\lim_{N\to\infty}\chi\log(N)\E\[(\eta_1^N)^b;  A_{N,\eps}\cap B_{N,\eps}\]$.
\end{proof}

\begin{proof}[Proof Theorem \ref{th:coaConvergence} ii).]
The convergence of the process $(\Pi^N_n([\chi \log(N) t]; t\geq0)$ and the
asymptotic for $\E\[T^N_2\]\sim c_N^{-1}$ (see (\ref{ET2})) 
follow directly from Proposition \ref{prop:mc} together with the  estimates in Lemmas \ref{coro:etabsto0} and \ref{lem:cv-poluy}. Finally, the 
asymptotic for $\E\[R^N_2\]$ follows from \eqref{eq:RMRCApmf} and Lemma \ref{coro:etabsto0}. 
\end{proof}


\subsection{Speed of evolution.}\label{sec:speedSemiPulled}

\begin{proof}[Proof Theorem \ref{th:speedSelection} ii)]
By Lemma \ref{lem:27} together with Propositions \ref{prop:chooseFittest} and \ref{prop:smallMassTail}  we have,
for every $\eps>0$,
$\E\[\log\(\sum_{k=1}^N w_{I_k}\)\] = \E\[\log\(\sum_{k=1}^N w_{I_k}\);A_{N,\eps},B_{N,\eps}\]+\lo{1}$ as $N\to\infty$,
which implies
\begin{align}\label{eq:BSvel-0}
&\lim_{N\to\infty}\abs{{\log(\chi\log N)} - \E\[\log\(\sum_{k=1}^N w_{I_k}\)\]}  \\
&=\lim_{N\to\infty} \abs{\log({\chi}\log N) - \E\[\log\(\sum_{k=1}^N w_{I_k}\);A_{N,\eps}\cap B_{N,\eps}\]}\nonumber
\end{align}
for every $\eps>0$.

Observe that on the event $A_{N,\eps}\cap B_{N,\eps}$  
we have
\begin{equation}\label{eq:BSvel-1}
\log\(\sum_{k=1}^{\lceil N^{\chi-\eps} \rceil} w_k\)
\leq \log\(\sum_{k=1}^N w_{I_k}\) 
\leq \log\(N^{-\eps\beta/2} + \sum_{k=1}^{\lceil N^{\chi+\eps} \rceil} w_k \).
\end{equation}

Also note that for any sequence $0\leq a_N\overset{N\to\infty}{\to}0$,
\begin{align}\label{eq:BSvel-2}
\E\[\log\(a_N + \sum_{k=1}^{N} w_k\)\]
&= \log\(\sum_{k=1}^N k^{-1}\) + \E\[\log\( \frac{a_N + \sum_{k=1}^{N} w_k}{\sum_{k=1}^{N} k^{-1}}\)\]
\end{align}
where we now prove that the second term on the r.h.s. converges to $0$ as $N\to\infty$. 
Indeed, on the one hand, by Jensen's inequality 
$$
\E\[\log\( \frac{a_N + \sum_{k=1}^{N} w_k}{\sum_{k=1}^{N} k^{-1}}\)\]
\leq \log\( \frac{a_N + \sum_{k=1}^{N} \E\[w_k\]}{\sum_{k=1}^{N} k^{-1}} \)\overset{N\to\infty}{\to}0
$$
where for the last limit we have used that, by Lemma \ref{le:pppsIdentities}, 

$$
\E\[w_k\] = \frac{\Gamma(k-1)}{\Gamma(k)} = \frac{1}{k-1}.
$$

On the other hand, by Jensen's inequality, and using that $a_N\geq0$,
\begin{align*}
&-\E\[\log\( \frac{a_N + \sum_{k=1}^{N} w_k}{\sum_{k=1}^{N} k^{-1}}\)\]
=\E\[\log\( \frac{\sum_{k=1}^{N} k^{-1}}{a_N + \sum_{k=1}^{N} w_k}\)\]\\
&\leq \log\(\E\[ \frac{\sum_{k=1}^{N} k^{-1}}{\sum_{k=1}^{N} w_k}\]\)
= \log\(\E\[\frac{\sum_{k=1}^{N} k^{-1}}{\sum_{k=1}^{N} w_kkk^{-1}}\]\)
\leq\log\(\E\[\frac{\sum_{k=1}^{N} w_k^{-1}k^{-2}}{\sum_{k=1}^{N} k^{-1}} \]\)\\
&=\log(1),
\end{align*}
where for the last equality we have again used Lemma \ref{le:pppsIdentities} which implies $\E\[w_k^{-1}\]=k$.

Thus, taking expectations in \eqref{eq:BSvel-1} and plugging in \eqref{eq:BSvel-2} we obtain, for every
$\eps>0$,
\begin{align*}
\log(\chi-\eps) + \log\log\(N\)+ \lo{1}
&\leq\E\[ \log\(\sum_{k=1}^N w_{I_k}\); A_{N,\eps}\cap B_{N,\eps}\]\\
&\leq \log(\chi+\eps) + \log\log\(N\)+ \lo{1}.
\end{align*}
This, together with \eqref{eq:BSvel-0}, implies
$$
\log(1-\frac{\eps}{\chi}) \leq \lim_{N\to\infty}\abs{\log({\chi}\log N)-\E\[ \log\(\sum_{k=1}^N w_{I_k}\)\]} \leq \log(1+\frac{\eps}{\chi}).
$$
The proof is finished by taking $\eps\to0$. 

\end{proof}



Finally, we prove the convergence in distribution of $R_1^N$ in the strong selection regime.
\begin{proof}[Proof of Theorem \ref{thm:typical} ii)]
Since the argument uses the same ingredients as before, we only provide a sketch of the proof and leave the details to the reader.
Let $[a,b] \subset (0,\chi)$. Then
\begin{equation}\label{eq:r1N1}
\P\left( \log({\tilde R_1^N}) / \log(N) \in[a,b]\right) \ = \ \E\left[ \frac{\sum_{k\in[N^a,N^b]} w_{\hat I_k} }{ \sum_{k \leq N} w_{\hat I_k}   }\right] 
\end{equation}
Recall that we take $b < \chi$. By Proposition \ref{prop:chooseFittest} and the fact that $k w_{k} \to 1$ a.s. as $k\to\infty$ it follows that 
$$
\P\left( \sum_{k\in[N^a,N^b]} w_{\hat I_k} = \sum_{k\in[N^a,N^b]} w_{k} \right) \ \to 1, \ \ \mbox{as $N\to\infty$.}
$$
On the one hand, 
$$
\lim_{N\to\infty}\frac{1}{\log(N)}\sum_{k\in[N^a,N^b]} w_{k} = \lim_{N\to\infty} \frac{1}{\log(N)} \sum_{k\in[N^a,N^b]} \frac{1}{k}  = (b-a) \ \ \ a.s..
$$
On the other hand, combing Propositions \ref{prop:chooseFittest} and \ref{prop:smallMassTail}, we have
$$
\lim_{N\to\infty} \frac{1}{\log(N)}\sum_{k \leq N} w_{\hat I_k}  \ =  \ \lim_{N\to\infty} \frac{1}{\log(N)}\sum_{k\leq N^{\chi}} w_k    =  \  \chi,
$$
where the limits hold in probability.
Using {\eqref{eq:r1N1}} and the bounded convergence theorem, this implies that
$$
\lim_{N\to\infty} \ \P\left( \log({\tilde R_1^N}) / \log(N) \in[a,b]\right)  \ \to \ (b-a)/\chi  
$$
which completes the proof of the Theorem.

\end{proof}

\appendix

\section{Large deviations for the natural selection step.}

In this section we provide large deviation estimates 
for sums of the $w_i$'s appearing in Definition \ref{definition-wi}. This will lead to estimates for the sum of powers of the
sampled $w_{I_i}$'s in the natural selection step. The latter 
{are} used in Sections \ref{sec:proofsWeakSelection} and \ref{sec:proofsStrongSelectino}
to study functionals of the vector $\bm\eta^N$, leading up to the main results exposed in the introduction.
\begin{lemma}\label{lem:local-cv}
Almost surely, $k w_k \to 1$ a.s as $k\to\infty$.
\end{lemma}
\begin{proof}
This follows by the strong law of large numbers for sums of
exponential r.vs.
\end{proof}

Given Lemma \ref{lem:local-cv}, a 
standard Riemann integral approximation argument yields for $a<1$ and as $b\to\infty$, that
$$
\sum_{i=1}^{b}w_i^{a} \approx \frac{1}{1-a}b^{1-a}
$$
with high probability.  In the next results, we provide large deviation estimates for the probability to deviate from this LLN. 
 
\begin{lemma}\label{lem:LLD}
Let $c > 1-\beta$ and $\delta\in[0,1)$. Define
$$
E_N \ = \ \left\{ \sum_{i= \ceil{N^\delta}}^{N} w_i^\beta \leq c^{-1} N^{1-\beta} \right\}.
$$
Then for every $\eta>0$, we have $\P(E_N) =o( N^{-\eta})$.
\end{lemma}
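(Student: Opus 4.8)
The plan is a routine union-bound large-deviation estimate, but it needs a little care with constants, so I will set those up first. By Lemma~\ref{le:pppsIdentities} we may and do realize the atoms as $w_i = 1/S_i$, where $S_i = \mathcal{E}_1 + \cdots + \mathcal{E}_i$ and $(\mathcal{E}_j)_{j\ge 1}$ are i.i.d.\ standard exponentials, so that $S_i$ is $\mathrm{Gamma}(i,1)$-distributed with mean $i$. Heuristically $\sum_{i=\lceil N^\delta\rceil}^{N} w_i^\beta \approx \sum_{i=1}^{N} i^{-\beta} \approx N^{1-\beta}/(1-\beta)$, and the hypothesis $c>1-\beta$ means $c^{-1} < (1-\beta)^{-1}$, so $E_N$ is a \emph{lower}-deviation event and should be far rarer than any power of $N$. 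Concretely, I would first fix small constants $\epsilon=\epsilon(\beta,c)>0$ and $\theta=\theta(\beta,c)>0$ such that
\begin{equation*}
(1+\theta)^{-\beta}\,\frac{1-\epsilon^{1-\beta}}{1-\beta}\;>\;c^{-1};
\end{equation*}
this is possible because the left-hand side converges to $(1-\beta)^{-1}>c^{-1}$ as $\epsilon,\theta\downarrow 0$, and this is the \emph{only} place where $c>1-\beta$ is used. Since $\delta<1$ we have $\lceil N^\delta\rceil\le\lceil\epsilon N\rceil$ for all large $N$, so (all summands being nonnegative) it suffices to bound $\sum_{i\in I_N} w_i^\beta$ from below, where $I_N:=\{\lceil\epsilon N\rceil,\dots,N\}$.

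The second step is to introduce the good event
\begin{equation*}
A_N\;:=\;\bigl\{\,S_i\le(1+\theta)\,i\ \text{ for every }i\in I_N\,\bigr\}
\end{equation*}
and to check that $A_N\subseteq E_N^{\,c}$ once $N$ is large. On $A_N$ we have $w_i^\beta = S_i^{-\beta}\ge(1+\theta)^{-\beta}i^{-\beta}$ for each $i\in I_N$, and a monotone Riemann-sum comparison gives $\sum_{i\in I_N} i^{-\beta}\ge \frac{1-\epsilon^{1-\beta}}{1-\beta}N^{1-\beta}-\lo{N^{1-\beta}}$; combined with the choice of $\epsilon,\theta$ this yields $\sum_{i=\lceil N^\delta\rceil}^{N} w_i^\beta > c^{-1}N^{1-\beta}$ for $N$ large, i.e.\ $A_N\subseteq E_N^{\,c}$. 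I should stress that keeping the factor $i^{-\beta}$ here is essential: the cruder bound $S_i^{-\beta}\ge S_N^{-\beta}$ would only produce the constant $(1+\theta)^{-\beta}(1-\epsilon)$, which does not beat $c^{-1}$ when $\beta$ is near $1$ — it is precisely the harmonic-type growth $\sum_{i\le N}i^{-\beta}\sim N^{1-\beta}/(1-\beta)$ that supplies the needed factor $(1-\beta)^{-1}$.

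The last step is to bound $\P(A_N^{\,c})$ by a union bound and the Chernoff estimate for Gamma variables: from $\E[e^{tS_i}]=(1-t)^{-i}$ for $t<1$, optimizing at $t=\theta/(1+\theta)$ gives
\begin{equation*}
\P\bigl(S_i\ge(1+\theta)\,i\bigr)\;\le\;\exp\!\bigl(-\,i\,h(\theta)\bigr),\qquad h(\theta):=\theta-\log(1+\theta)>0.
\end{equation*}
Summing over the at most $N$ indices $i\in I_N$, each with $i\ge\epsilon N$, gives $\P(A_N^{\,c})\le N\exp(-\epsilon\,h(\theta)\,N)$, whence $\P(E_N)\le\P(A_N^{\,c})=\bO{N e^{-\epsilon h(\theta)N}}=\lo{N^{-\eta}}$ for every $\eta>0$. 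The argument is essentially mechanical; the only real obstacle is the constant bookkeeping in the first two steps — choosing $\epsilon,\theta$ and verifying $A_N\subseteq E_N^{\,c}$ — which is exactly where the sharp threshold $c>1-\beta$ is exploited.
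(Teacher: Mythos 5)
Your proof is correct, and it takes a route that differs from the paper's in its key mechanism. The paper first localizes the atoms (Cram\'er bounds on $w_N$ and $w_{\lceil N^\delta\rceil}$), lower-bounds the sum by the Poisson integral $\int_{a_1}^{a_2} x^\beta\,\Xi(dx)$ with $a_1\approx N^{-1}$, $a_2\approx N^{-\delta}$, and then controls the lower tail of that integral in one shot via an exponential Markov inequality together with Campbell's formula, using $1-e^{-x^\beta}\ge e^{-a_2^\beta}x^\beta$ on the integration range. You instead control every atom individually: writing $w_i=1/S_i$ with $S_i\sim\mathrm{Gamma}(i,1)$, you impose the uniform good event $\{S_i\le(1+\theta)i,\ i\in[\lceil\epsilon N\rceil,N]\}$ via a union bound of Gamma--Chernoff estimates, and then the lower bound on the sum becomes a deterministic Riemann-sum computation. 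Both arguments exploit $c>1-\beta$ in exactly the same place (the harmonic-type constant $\sum_{i\le N}i^{-\beta}\sim N^{1-\beta}/(1-\beta)$ must beat $c^{-1}N^{1-\beta}$ after small losses in $\epsilon,\theta$), and both yield the required superpolynomial decay --- indeed your bound $Ne^{-\epsilon h(\theta)N}$ is exponentially small, as is the paper's. Your version avoids Campbell's formula entirely at the cost of a union bound over $O(N)$ indices, which is harmless here; the paper's single exponential-moment computation is slightly more economical and generalizes more directly to other functionals of $\Xi$ (it is reused in Lemma~\ref{lem:LLD2}). Your observation that the crude bound $S_i^{-\beta}\ge S_N^{-\beta}$ would not suffice is also accurate: one genuinely needs the $i^{-\beta}$ decay of the individual atoms to recover the factor $(1-\beta)^{-1}$.
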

\begin{proof}
By Lemma \ref{le:pppsIdentities} and standard  Cramer large deviation estimates for sums of i.i.d exponential r.vs., $\P\(w_N\geq \frac{N^{-1}}{1-\eps}\)$ and
$\P\(w_{\ceil{N^\delta}}\leq \frac{N^{-\delta}}{1+\eps}\)$  are of order $\lo{N^{-\eta}}$
for any choice of $0<\eps<1$ {and $\eta\ge 0$}. Thus
\begin{align*}
\P\(E_N\)
&= \P\(E_N; w_N< \frac{N^{-1}}{1-\eps}, w_{\ceil{N^\delta}}> \frac{N^{-\delta}}{1+\eps}\) + \lo{N^{-\eta}}\\
&\leq  \P\(\int_{\frac{N^{-1}}{1-\eps}}^{\frac{N^{-\delta}}{1+\eps}} x^{\beta}\PPP(dx) \leq \frac{N^{1-\beta}}{c}\) + \lo{N^{-\eta}}.
\end{align*}
It only remains to prove that for some appropriate choice of $\eps>0$ we have
\begin{equation}\label{eq:LLD-0}
\P\(\int_{\frac{N^{-1}}{1-\eps}}^{\frac{N^{-\delta}}{1+\eps}} x^{\beta}\PPP(dx) 
\leq \frac{1}{c(1-\eps)^{1-\beta}}\(N(1-\eps)\)^{1-\beta}\) = \lo{N^{-\eta}}.
\end{equation}
Using Markov's inequality and Campbell's formula, for any $a_1<a_2$ and $b>0$ we have
\begin{align*}
\P\(\int_{a_1}^{a_2} x^{\beta}\PPP(dx) \leq b^{-1}a_1^{\beta-1}\)
&\leq e^{b^{-1}a_1^{\beta-1}}\E\[\exp\lbr-\int_{a_1}^{a_2} x^{\beta}\PPP(dx)\rbr\]\\
&= \exp\lbr b^{-1}a_1^{\beta-1} - \int_{a_1}^{a_2}\frac{1-e^{-x^\beta}}{x^2}dx \rbr.
\end{align*}
Next, for every $x>0$, 
$1-e^{-x^\beta} = x^\beta e^{-\theta}$ for some $\theta\in[0, x^\beta]$. Thus
$$
1-e^{-x^\beta}\geq e^{-a_2^\beta} x^\beta
$$
for all $x\in[a_1,a_2]$. As a consequence, 
\begin{align}\label{eq:LLD-1}
\log\(\P\(\int_{a_1}^{a_2} x^{\beta}\PPP(dx)
     < ba_1^{\beta-1}\)\)
&\leq b^{-1}a_1^{\beta-1} - e^{-a_2^\beta}\int_{a_1}^{a_2}x^{\beta-2}dx \\
&=a_1^{\beta-1}b^{-1}\(1 - \frac{b\(1-(a_2/a_1)^{\beta-1}\)}{1-\beta}e^{-a_2^\beta}\).\nonumber 
\end{align}
Set $a_1\equiv a_1^N= \frac{N^{-1}}{1-\eps}$, $a_2\equiv a_2^N=\frac{N^{-\delta}}{1+\eps}$ and 
$b=c(1-\eps)^{1-\beta}$ where we recall that $c$ is a constant larger than $1-\beta$. 
Take $0<\eps<1$ small enough to ensure that 
$\frac{b}{1-\beta}>1$, then, since {$a_2^N/a_1^N\to\infty$ and $\beta-1<0$}, we obtain
$
\(1 - \frac{b\(1-(a_2/a_1)^{\beta-1}\)}{1-\beta}e^{-a_2^\beta}\)<0
$
for $N$ large enough. Substituting in \eqref{eq:LLD-1} we get \eqref{eq:LLD-0}.
\end{proof}

\begin{lemma}\label{lem:LLD2}
Let $2c \in (0,1-\beta)$ and
$$
E_N \ = \ \left\{ \sum_{i=1}^{N} w_i^\beta \geq c^{-1} N^{(1-\beta)} \right\}.
$$
Then there exists $\eta>0$ such that $\P(E_n) = o(N^{-\eta} )$.
\end{lemma}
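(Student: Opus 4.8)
The plan is to prove this as the upper‑tail analogue of Lemma~\ref{lem:LLD}. Recall that $w_i\sim i^{-1}$ a.s.\ (Lemma~\ref{lem:local-cv}), so the typical value of $\sum_{i=1}^N w_i^\beta$ is $\sim\frac1{1-\beta}N^{1-\beta}$; since $2c<1-\beta$ we have $c^{-1}>\frac{2}{1-\beta}$, so $E_N$ forces the sum to exceed roughly twice its typical value, a genuine large deviation. One would like to argue exactly as in Lemma~\ref{lem:LLD}, via an exponential Markov inequality together with Campbell's formula for $\Xi$, but this fails directly: $\int_0^\infty(e^{\lambda x^\beta}-1)\,x^{-2}\,dx=+\infty$ for every $\lambda>0$, because the largest atoms of $\Xi$ are too heavy (e.g.\ $w_1^\beta$ has no finite exponential moment). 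The main idea --- and the step I expect to be the only delicate one --- is therefore to separate scales: peel off a tail piece supported on the large atoms, where no exponential moment is available and one must instead use a crude first‑moment estimate, and keep a bulk piece on which Campbell's formula applies.

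Concretely, fix a constant $M>0$, fix $\epsilon>0$ small (to be constrained below), and set $a_N=\frac{N^{-1}}{1+\epsilon}$. On the event $\{w_N>a_N\}$ all of the top $N$ atoms exceed $a_N$, hence $\sum_{i=1}^N w_i^\beta\le \int_{(a_N,\infty)}x^\beta\,\Xi(dx)=\int_{(a_N,M]}x^\beta\,\Xi(dx)+\int_{(M,\infty)}x^\beta\,\Xi(dx)$ for $N$ large enough. Splitting the threshold in half, this yields
\begin{equation*}
\P(E_N)\ \le\ \P\big(w_N\le a_N\big)\ +\ \P\Big(\int_{(a_N,M]}x^\beta\,\Xi(dx)\ge \tfrac{1}{2c}N^{1-\beta}\Big)\ +\ \P\Big(\int_{(M,\infty)}x^\beta\,\Xi(dx)\ge \tfrac{1}{2c}N^{1-\beta}\Big).
\end{equation*}
The first term is $o(N^{-\eta})$ for every $\eta$: by Lemma~\ref{le:pppsIdentities}, $w_N\overset{d}{=}(\mathcal{E}_1+\cdots+\mathcal{E}_N)^{-1}$ with $\mathcal{E}_i$ i.i.d.\ standard exponential, so $\P(w_N\le a_N)=\P(\mathcal{E}_1+\cdots+\mathcal{E}_N\ge(1+\epsilon)N)$ decays exponentially by Cramér's theorem, exactly as invoked in the proof of Lemma~\ref{lem:LLD}. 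The third term is $O(N^{-(1-\beta)})$ by Markov's inequality, since $\beta<1$ makes $\E\big[\int_{(M,\infty)}x^\beta\,\Xi(dx)\big]=\int_M^\infty x^{\beta-2}\,dx=\frac{M^{\beta-1}}{1-\beta}$ finite.

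For the middle term --- the bulk piece --- apply the exponential Chebyshev inequality and Campbell's formula for the restriction of $\Xi$ to $(a_N,M]$ (a PPP with finite intensity): for $\lambda>0$,
\begin{equation*}
\P\Big(\int_{(a_N,M]}x^\beta\,\Xi(dx)\ge b\Big)\ \le\ \exp\Big(-\lambda b+\int_{a_N}^{M}(e^{\lambda x^\beta}-1)\,x^{-2}\,dx\Big).
\end{equation*}
Using $e^{u}-1\le u e^{u}$ and $x\le M$ on the range of integration gives $\int_{a_N}^M(e^{\lambda x^\beta}-1)\,x^{-2}\,dx\le \lambda e^{\lambda M^\beta}\int_{a_N}^M x^{\beta-2}\,dx\le \frac{\lambda e^{\lambda M^\beta}}{1-\beta}\,a_N^{\beta-1}=\frac{\lambda e^{\lambda M^\beta}(1+\epsilon)^{1-\beta}}{1-\beta}\,N^{1-\beta}$. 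Taking $b=\frac{1}{2c}N^{1-\beta}$ and $\lambda=\lambda_0$ a small constant, the exponent equals $\lambda_0 N^{1-\beta}\big(\frac{e^{\lambda_0 M^\beta}(1+\epsilon)^{1-\beta}}{1-\beta}-\frac{1}{2c}\big)$; since $\frac{1}{1-\beta}<\frac{1}{2c}$, the bracket is strictly negative once $\epsilon$ and $\lambda_0$ are chosen small enough, so this term is $\le e^{-\kappa N^{1-\beta}}$ for some $\kappa>0$, hence $o(N^{-\eta})$ for all $\eta$. Combining the three estimates gives $\P(E_N)=O(N^{-(1-\beta)})$ --- the rate being dictated by the heavy top atoms --- which establishes the lemma with, for instance, $\eta=(1-\beta)/2$.
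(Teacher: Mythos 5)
Your proof is correct and follows essentially the same route as the paper's: control $w_N$ via Cramér to reduce to an integral against $\Xi$ above $a_N=\frac{N^{-1}}{1+\epsilon}$, handle the heavy top atoms above a fixed constant by a first-moment Markov bound (giving the $O(N^{-(1-\beta)})$ rate), and handle the bulk $(a_N,M]$ by exponential Chebyshev plus Campbell's formula, with the bracket made negative because $2c<1-\beta$. The only cosmetic difference is that you keep a free tilt parameter $\lambda_0$ where the paper fixes $\lambda=1$.
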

\begin{proof}
As in the previous Lemma, we have
$\P\(w_N< \frac{N^{-1}}{1+\eps}\)= o(N^{-\eta})$
for every $\eta>0$. Thus, by Lemma \ref{le:pppsIdentities} 
\begin{align*}
\P\(E_N\) 
&\leq \P\(\int_{\frac{N^{-1}}{1+\eps}}^\infty x^{-\beta}\PPP(dx) \geq \frac{N^{(1-\beta)}}{c}\)
      + \lo{N^{-\eta}}.
\end{align*}
It remains to prove that with an appropriate choice of $\eps>0$, the first term in the right hand side is of order 
$\lo{N^{-\eta}}$ for some $\eta>0$. 
 Using Markov's inequality and Campbell's formula as before, for any $a<b<\infty$ and $A>0$ we compute,
\begin{align*}
\P\(\int_{a}^{b} x^{\beta}\PPP(dx) \geq  A \)
&\leq \E\[\exp\lbr \int_{a}^{b} x^{\beta}\PPP(dx)\rbr \]e^{-A}\\
&= \exp\lbr -\int_{a}^{b}\frac{1-e^{x^\beta}}{x^2}dx - A \rbr.
\end{align*}
We have
$
1-e^{x^\beta}\geq -x^\beta e^{b^\beta}
$
for all $0<x<b$. Thus
\begin{align*}
\log\(\P\(\int_{a}^{b} x^{-\beta}\PPP(dx) \geq A \)\)
&\leq e^{b^\beta}\int_{a}^{b}x^{\beta-2} dx - A \\
&\leq \frac{a^{\beta-1}}{1-\beta}\(e^{b^\beta}- A\frac{(1-\beta)}{a^{\beta-1}}\).
\end{align*}
Take $a\equiv a_N=\frac{N^{-1}}{1+\eps}$ and $A \equiv A_N = \frac{N^{(1-\beta)}}{2c}$. By the previous 
 estimate and Markov's inequality
\begin{align*}
&\P\(\int_{\frac{N^{-1}}{1+\eps}}^\infty x^{-\beta}\PPP(dx) \geq \frac{N^{(1-\beta)}}{c}\)\\
&\leq \P\(\int_{b}^{\infty} x^{\beta} \PPP(dx) 
         \geq \frac{N^{(1-\beta)}}{2c}\) +
     \P\(\int_{a_N}^{b} x^{\beta} \PPP(dx)
         \geq  \frac{N^{(1-\beta)}}{2c} \) \\
&\leq 2ca_N^{1-\beta}\E\[\int_{b}^\infty x^\beta \PPP(dx)\] +
     \exp\lbr  \frac{a_N^{\beta-1}}{1-\beta}\(e^{b^\beta}-\frac{1-\beta}{2c(1+\eps)^{1-\beta}}\) \rbr.
\end{align*}
Since $\E\[\int_{b}^\infty x^\beta \PPP(dx)\]=\int_{b}^\infty x^{\beta-2}dx$ the first term above is of order 
$\bO\(a_N^{(1-\beta)}\)=\bO\(N^{\beta-1}\)$, whereas the second term vanishes exponentially fast whenever $b,\eps>0$ are chosen small enough to ensure
$
e^{b^\beta} - \frac{1-\beta}{2c(1+\eps)^{1-\beta}} <0.
$

\end{proof}

\begin{corollary}\label{cor:ldd}
Let $a>0,b\geq0$ and $\delta\geq0$. Assume that $A:=a+\beta-1>0$. 
Then
$$
\lim_{N\to\infty}N^{-\chi\beta+bA}\E\left[ \sum_{k=1}^N w_{I^N_{k}}^a \Ind{w_{I_k^N} \leq \frac{\delta}{N^b} } \right]
\leq (1-\beta)\frac{\delta^A}{A}.
$$
\end{corollary}
\begin{proof}
Using that 
\begin{equation}\label{eq:chosenProbBound}
\P\(I^N_i=k \ | \ (w_i)_i\)\leq \frac{w_k^\beta}{\sum_{j=N}^{N^\rho} w_j^\beta}, \quad \forall 1\leq i\leq N,1\leq k\leq \lceil N^\rho \rceil,
\end{equation} 
and thus, summing over $i$,
$\P\(k\in I^N \ | \ (w_i)_i\)\leq Nw_k^\beta/\sum_{j=N}^{N^\rho} w_j^\beta$ for every $k$, note that 
\begin{align*}
\E\bigg[\sum_{k=1}^N w^{a}_{I^N_k}\Ind{w_{I^N_k}< \frac{\delta}{N^b}} \ | \ (w_i)_i\bigg]
& \leq   N  \  \sum_{k=1}^{\ceil{N^\rho}}  \ w^a_{k}\Ind{w_{k}< \frac{\delta}{N^{b}}}  \ \frac{w_k^\beta}{\sum_{j=N}^{N^\rho} w_j^\beta}.
\end{align*}
We will also use the trivial bound
\begin{align}\label{eq:corlldTrivialBound}
\sum_{k=1}^N w^a_{I^N_k}\Ind{w_{I^N_k}< \frac{\delta}{N^{b}}}
& \leq   N  \  (\frac{\delta}{N^{b}})^a.
\end{align}
Let $\bar E_N= \left\{ \sum_{i=N}^{\lceil N^\rho \rceil} w_i^\beta \leq c^{-1} N^{\rho(1-\beta)} \right\}$, 
then by Lemma \ref{lem:LLD} (replace $N$ and $\delta$ therein by $\lceil N^\rho\rceil$ and $1/\rho$ respectively),
\begin{align*}
\E\bigg[ \sum_{k=1}^N w^a_{I^N_k}\Ind{w_{I^N_k}< \frac{1}{N^{b}}}\bigg]
& \leq    N  \ \E\[ \sum_{k=1}^{\ceil{N^\rho}}  \ w^a_{k}\Ind{w_{k}< \frac{\delta}{N^{b}}}  \ \frac{w_k^\beta}{\sum_{j=N}^{\ceil{N^\rho}} w_j^\beta} ; \bar E_N^c \] \ + \   N  \  \frac{\delta^a}{N^{ab}} \P(\bar E_N)\\
& \leq  c N^{1-\rho(1-\beta)}  \ \E\[ \sum_{k=1}^{\ceil{N^\rho}}  \ w^{a+\beta}_{k}\Ind{w_{k}< \frac{\delta}{N^{b}}}   \]   + N  \  \frac{\delta^a}{N^{ab}}  \P(\bar E_N) \\
& =   c N^{1-\rho(1-\beta)}   \int_0^{\frac{\delta}{N^b}}\  \frac{dx}{x^{2-a-\beta}}  +   N  \  \frac{\delta^a}{N^{ab}} \P(\bar E_N).
\end{align*}
where we used Lemma \ref{le:pppsIdentities} in the last line.
The result follows from Lemma \ref{lem:LLD} and subsequently taking $c\downarrow 1-\beta$.
\end{proof}

\begin{corollary}\label{cor:ldd2}
Let $a>0,b\geq0$. Assume that $A=a+\beta-1<0$. 
Then 
$$
\lim_{N\to\infty}N^{-\chi\beta + b  A}\E\left[ \sum_{k=1}^N w_{I^N_{k}}^a \Ind{w_{I_k^N} \geq \frac{1}{N^b} } \right] \ \leq \ (1-\beta) \frac{1}{\abs{A}} .
$$
\end{corollary}
\begin{proof}
This can be proved analogously to the previous result by Lemma \ref{lem:LLD}.
By the same argument as in the previous corollary, intersecting again with the set 
$\bar E_N = \left\{ \sum_{i=N}^{\lceil N^\rho \rceil} w_i^\beta \leq c^{-1} N^{\rho(1-\beta)} \right\}$, but now using the trivial bound
$$
\sum_{k=1}^N w_{I^N_{k}}^a \Ind{w_{I_k^N} \geq N^{-b} }  \leq 
Nw_1^a \Ind{w_{1} \geq \frac{1}{N^b} }
$$
instead of \eqref{eq:corlldTrivialBound}, we obtain
\begin{align*}
&\E\bigg[ \sum_{k=1}^N w^a_{I^N_k}\Ind{w_{I^N_k}> \frac{1}{N^{b}}}\bigg]\\
& \leq    N  \ \E\[ \sum_{k=1}^{\ceil{N^\rho}}  \ w^a_{k}\Ind{w_{k}> \frac{1}{N^{b}}}  \ \frac{w_k^\beta}{\sum_{j=N}^{\ceil{N^\rho}} w_j^\beta} , \bar E_N^c \] \ + \   N  \E\[w_1^a\Ind{w_1>N^{-b}}, \bar E_N\]\\
& \leq  c N^{1-\rho(1-\beta)}  \ \E\[ \sum_{k=1}^{\ceil{N^\rho}}  \ w^{a+\beta}_{k}\Ind{w_{k}> \frac{1}{N^{b}}}   \]   + N  \norm{w_1^a\Ind{w_1>N^{-b}}}_{\Lp{2}}\(\P\(\bar E_N\)\)^{1/2}.
\end{align*}
Using Lemma \ref{le:pppsIdentities} we have  
$\E\[w_1^{2a}\Ind{w_1>N^{-b}}\]=\int_{N^{-b}}^\infty x^{-2a}e^{-x}dx$ which is of polynomial order on $N$ for any choice of $a$.
Thus Lemma \ref{lem:LLD} acertains that the second term above is of order $\lo{N^{\chi\beta - b  A}}$. 

For the first term we compute, using {Lemma \ref{le:pppsIdentities} and} the hypothesis $A<0$,
\begin{align*}
\E\[ \sum_{k=1}^{\ceil{N^\rho}}  \ w^{a+\beta}_{k}\Ind{w_{k}> \frac{1}{N^{b}}}   \]
&= \int_{\frac{1}{N^b}}^\infty \frac{dx}{x^{2-\alpha+\beta}} 
= \frac{1}{\abs{A}} N^{-bA}
\end{align*}
{and the Corollary follows.}
\end{proof}

\begin{remark}\label{rem:lddReplacement}
Corollaries \ref{cor:ldd} and \ref{cor:ldd2} remain valid if we replace the sampled indices $I^N$ 
(sampled without replacement) by a sample $J^N$ with replacement from $[N^\rho]$ with the same weights 
$\(w_i^\beta\)_{i=1}^{\lceil N^\rho \rceil}$ as for $I^N$. Indeed, this follows by observing that
even in this case the
key bound in \eqref{eq:chosenProbBound} remains valid.
\end{remark}

\begin{lemma}\label{lem:27}
Let $(E_N)$ be a sequence of events such that, for some $\eta'>0$,
$$
\lim_{N\to\infty} N^{\eta'}\P(E_N) \ = \ 0.
$$
Then, for every $\eta>0$
$$
\lim_{N\to\infty} \E\bigg[ |\log(\sum_{k=1}^N w_{I^N_k} )|^\eta; E_N \bigg] \ = \ 0.
$$
\end{lemma}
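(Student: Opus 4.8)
The plan is to control $\E\big[|\log(\sum_{k=1}^N w_{I^N_k})|^{\eta};E_N\big]$ by a Cauchy--Schwarz split, exploiting that $\log(\sum_{k=1}^N w_{I^N_k})$ has moments of polynomial order in $\log N$ (uniformly in $N$), whereas $\P(E_N)$ decays faster than any fixed negative power of $N$. Concretely, I would write
$$
\E\Big[\big|\log\big(\textstyle\sum_{k=1}^N w_{I^N_k}\big)\big|^{\eta};\, E_N\Big] \ \leq \ \Big(\E\big|\log\big(\textstyle\sum_{k=1}^N w_{I^N_k}\big)\big|^{2\eta}\Big)^{1/2}\,\P(E_N)^{1/2},
$$
so that it suffices to establish a bound of the form $\E\big|\log(\sum_{k=1}^N w_{I^N_k})\big|^{m} \leq C_m\,(\log N)^{m}$ for every $m>0$ and all $N$ large. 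Combining this with $\P(E_N)^{1/2}=\lo{N^{-\eta'/2}}$ then gives $\E[\,\cdot\,;E_N]=\lo{(\log N)^{\eta}N^{-\eta'/2}}\to 0$.

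For the moment bound I would first use a deterministic sandwich. Since the sampled indices satisfy $I^N\subseteq[\lceil N^\gamma\rceil]$ and the atoms $(w_k)$ are listed in decreasing order, every selected atom obeys $w_{\lceil N^\gamma\rceil}\leq w_{I^N_k}\leq w_1$; hence $N\,w_{\lceil N^\gamma\rceil}\leq \sum_{k=1}^N w_{I^N_k}\leq N\,w_1$, and therefore
$$
\Big|\log\Big(\textstyle\sum_{k=1}^N w_{I^N_k}\Big)\Big| \ \leq \ \log N + |\log w_1| + |\log w_{\lceil N^\gamma\rceil}|.
$$
By Lemma~\ref{le:pppsIdentities} I may realize $w_1=1/E_1$ and $w_{\lceil N^\gamma\rceil}=1/(E_1+\cdots+E_{\lceil N^\gamma\rceil})$ with $(E_i)$ i.i.d.\ standard exponentials, so the claim reduces to two facts: $\E|\log E_1|^m<\infty$, immediate from the exponential and doubly-exponential tails of $\log E_1$; and $\E|\log S_M|^m=\bO{(\log M)^m}$ uniformly in $M$, where $S_M:=E_1+\cdots+E_M\sim\mathrm{Gamma}(M,1)$. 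Since $M=\lceil N^\gamma\rceil$ gives $\log M=\bO{\log N}$, this yields the desired estimate.

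The bound on $\E|\log S_M|^m$ is the only genuine work. I would split $|\log S_M|\leq \log^+ S_M+\log^- S_M$ and use the elementary inequalities $\log^+ x\leq \varepsilon^{-1}x^{\varepsilon}$ and $\log^- x\leq \varepsilon^{-1}x^{-\varepsilon}$, valid for all $x,\varepsilon>0$, to get $\E[(\log^\pm S_M)^m]\leq \varepsilon^{-m}\,\E[S_M^{\pm m\varepsilon}]$. By log-convexity of the Gamma function one has $\E[S_M^{s}]=\Gamma(M+s)/\Gamma(M)\leq M^{s}$ for $s\in[0,1]$ and $\E[S_M^{-s}]=\Gamma(M-s)/\Gamma(M)\leq (M-1)^{-s}$ for $s\in(0,1)$. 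Choosing $\varepsilon=1/\log M$ (legitimate once $\log M\geq m$) then makes $M^{\pm m\varepsilon}$ bounded by a constant depending only on $m$, while $\varepsilon^{-m}=(\log M)^m$; hence $\E[(\log^\pm S_M)^m]=\bO{(\log M)^m}$ for $M$ large, and assembling the pieces gives $\E|\log S_M|^m\leq \tilde C_m(\log M)^m$. The main, and rather mild, obstacle is precisely this handling of the small values of $S_M$ — the large values are harmless — but it is dispatched by the $\Gamma$-function bound above; everything else is bookkeeping, and the argument never uses how $I^N$ is sampled, only the inclusion $I^N\subseteq[\lceil N^\gamma\rceil]$ (so in particular it applies equally to the with-replacement sampling of Remark~\ref{rem:lddReplacement}).
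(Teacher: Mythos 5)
Your proof is correct and rests on exactly the same ingredients as the paper's: the deterministic sandwich $N w_{\lceil N^\gamma\rceil}\leq \sum_{k=1}^N w_{I^N_k}\leq N w_1$, the exponential representation of the atoms from Lemma \ref{le:pppsIdentities}, Cauchy--Schwarz against $\P(E_N)^{1/2}=\lo{N^{-\eta'/2}}$, and bounds of the form $\log^{\pm}x\leq \varepsilon^{-1}x^{\pm\varepsilon}$ combined with Gamma-function moment estimates. The only difference is organisational: you apply Cauchy--Schwarz first and then prove the clean polylogarithmic moment bound $\E|\log S_M|^m=\bO{(\log M)^m}$ via the choice $\varepsilon=1/\log M$, whereas the paper first replaces the logarithms by fixed small powers $x^{\varepsilon}$, accepts a polynomial factor $N^{\varepsilon}$ (respectively $N^{\gamma\varepsilon}$), and lets $\P(E_N)^{1/2}$ absorb it --- both routes close without difficulty.
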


\begin{proof}
Let $\eta>0$ be fixed. Since $Nw_{\lceil N^\rho\rceil}\leq \sum_{k=1}^N w_{I_k}\leq Nw_1$
we have
\begin{align}
&\E\[\abs{\log\(\sum_{k=1}^N w_{I_k} \)}^\eta;E_N\]\nonumber\\
&\leq  \E\[\log^\eta(Nw_1); \sum_{i=1}^N w_{I_{k}}>1, E_N\] + \E\[\log^\eta(w^{-1}_{\lceil N^\rho \rceil}/N); \sum_{i=1}^N w_{I_{k}}\leq 1 , E_N\] \nonumber \\
& \leq  \E\[\log^\eta(Nw_1); Nw_1>1, E_N\] + \E\[\log^\eta(w_{\lceil N^\rho \rceil}^{-1}/N); w^{-1}_{\lceil N^\rho \rceil}/N \geq 1, E_N\]  \label{eq:fino}
\end{align}
We first deal with the first term on the r.h.s. of the inequality.
Let $\eps>0$. There exists a constant $c_{\eps,\eta}$ such that for every $x\geq1$, we have $\log^\eta(x) \leq c_{\eps,\eta} w^\eps$.
By Cauchy-Schwarz inequality,
\begin{eqnarray*}
\nonumber
\E\[\log^\eta(Nw_1); Nw_1>1, E_N\]  & \leq & c_{\eps,\eta}  N^\eps \E\[ w_1^\eps; E_N \] \\
& \leq & c_{\eps,\eta} N^\eps \E\[w_1^{2\eps}\]^{1/2} \P(E_N)^{1/2}.
\end{eqnarray*}
Recall that $w_1$ is identical to the inverse of a standard exponential r.v. As a consequence, the latter expectation is finite for $\eps<1/2$. Further, picking $\eps$ small enough such that $\eps <2\eta'$, 
the assumption of our lemma implies that $N^\eps \P(E_N)^{1/2}\to0$ so that the first term on the r.h.s. of (\ref{eq:fino}) must go to $0$.

We turn to the second term.
By a similar argument as before, for any $\eps>0$, we have
$$
\E\[\log^\eta(w_{\lceil N^\rho \rceil}^{-1}/N); w^{-1}_{\lceil N^\rho \rceil}/N \geq 1, E_N\] 
\leq \frac{c_\eps}{N^\eps} \E\[\frac{1}{{w_{N^{\rho}}^{2\eps}}}\]^{1/2} \P(E_N)^{1/2}
$$
Lemma \ref{le:pppsIdentities} together with Stirling's approximation yield 
$\norm{w_{\lceil N^\rho \rceil}^{-\eps}}_{L^2}^2=\frac{\Gamma(\lceil N^\rho \rceil+2\eps)}{\Gamma(\lceil N^\rho \rceil)}\sim N^{\rho2\eps}$. Thus, taking $\eps$ such that $0<\eps(\rho-1)<2\eta'$, the second term in (\ref{eq:fino}) is also vanishing.
\end{proof}

\section{Uniform integrability}

\begin{proposition}\label{lem:log-cond}
The collection of r.vs. $\left\{\log\(\sum_{k=1}^N N^\alpha w_{I^N_k}\) \right\}_{N}$  is
uniformly integrable.
\end{proposition}

The following easy-to-derive technical lemma will be used in the upcoming proof of Proposition \ref{lem:log-cond}.
\begin{lemma}\label{lemmm}
Let $\eta>1$ and  $\delta>0$. Then there exists $c>0$ such that
$$
\forall b\geq 1, x\geq 0, \ \ \log^{\eta}(b+x) \leq  c b^\delta + c x.
$$
\end{lemma}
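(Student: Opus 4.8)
The statement to prove is Lemma~\ref{lemmm}: for $\eta>1$ and $\delta>0$, there exists $c>0$ such that $\log^\eta(b+x)\le cb^\delta+cx$ for all $b\ge 1$, $x\ge 0$.

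The plan is to split the analysis according to the size of $x$ relative to $b$. First I would observe that it suffices to bound $\log^\eta(b+x)$ separately in the regime $x\le b$ and the regime $x>b$, since the right-hand side $cb^\delta+cx$ only gets larger when we add terms. In the regime $x\le b$, we have $b+x\le 2b$, so $\log^\eta(b+x)\le\log^\eta(2b)\le(\log 2+\log b)^\eta$. Since $b\ge 1$, for the elementary inequality $\log^\eta(t)=o(t^\delta)$ as $t\to\infty$ (valid for any $\eta>0$ and $\delta>0$), there is a constant $c_1$ with $(\log 2 + \log b)^\eta = \log^\eta(2b)\le c_1 (2b)^\delta\le c_1 2^\delta b^\delta$ for all $b\ge 1$; this handles the first regime with a constant depending only on $\eta,\delta$.

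In the regime $x>b\ge 1$, we have $b+x\le 2x$, so $\log^\eta(b+x)\le\log^\eta(2x)$. Again using $\log^\eta(t)=o(t)$ as $t\to\infty$ (now with the trivial exponent $1$ in place of $\delta$, valid since $\eta>1$ is finite — indeed $\log^\eta t\le C_\eta t$ for $t\ge 1$ with some constant $C_\eta$, because $\log^\eta t/t\to 0$), there is a constant $c_2$ such that $\log^\eta(2x)\le c_2\cdot 2x = 2c_2 x$ for all $x\ge b\ge 1$ (in particular $x\ge 1$). This handles the second regime linearly in $x$. Taking $c=\max\{c_1 2^\delta,\,2c_2\}$ then gives $\log^\eta(b+x)\le cb^\delta+cx$ in both regimes, hence for all $b\ge 1$, $x\ge 0$.

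There is essentially no obstacle here: the only ingredient is the standard fact that $\log^\eta t$ grows slower than any positive power of $t$, which can itself be proved by substituting $t=e^u$ and comparing $u^\eta$ with $e^{\delta u}$, or by noting that $x\mapsto x^{-\delta}\log^\eta x$ is continuous on $[1,\infty)$ and tends to $0$ at infinity, hence is bounded. The only mild care needed is bookkeeping to ensure the constant is uniform in $b$ and $x$ over the stated ranges, which the two-regime split arranges cleanly. The statement is used later (in the proof of Proposition~\ref{lem:log-cond}) precisely in this form, with $b=\sum_{k=1}^N N^\alpha w_{I^N_k}$-type quantities bounded below and $x$ absorbing the fluctuations, so no stronger version is required.
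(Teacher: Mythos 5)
Your proof is correct, but it follows a genuinely different route from the paper's. The paper writes $\log^\eta(b+x)=\log^\eta(b)+\int_b^{b+x}\eta\,\frac{\log^{\eta-1}(u)}{u}\,du$ via the fundamental theorem of calculus, bounds the integrand by $\max_{u\ge 1}\log^{\eta-1}(u)/u<\infty$ to get $\log^\eta(b+x)\le\log^\eta(b)+Cx$, and then applies $\log^\eta(b)\le \bar c\, b^\delta$. This cleanly separates the $b$- and $x$-dependence additively in a single step. You instead split into the regimes $x\le b$ and $x>b$, use $b+x\le 2b$ (resp.\ $b+x\le 2x$), and invoke the elementary bounds $\log^\eta(t)\le c_1 t^\delta$ and $\log^\eta(t)\le c_2 t$ on $[1,\infty)$; in each regime one of the two terms $cb^\delta$, $cx$ alone suffices. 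Your argument is slightly more elementary (no calculus beyond the boundedness of $t^{-\delta}\log^\eta t$ on $[1,\infty)$, which you justify correctly), and it incidentally sidesteps a small slip in the paper's display, where the prefactor of the integrand is written as $\eta-1$ instead of $\eta$ (harmless, since only the finiteness of the supremum matters). Both proofs yield a constant depending only on $\eta$ and $\delta$, as required for the later application in Proposition~\ref{lem:log-cond}.
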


\begin{proof}[Proof of Proposition \ref{lem:log-cond}]
This boils down to proving that for some $\eta>1$, we have
\begin{eqnarray}
\limsup_{N\to\infty} \  \E\[\abs{\log\(N^\alpha\sum_{k=1}^N w_{I^N_k}\)}^{\eta}; N^\alpha\sum_{k=1}^N w_{I^N_k}<1\] <\infty \label{eq:limsup1}\\
\limsup_{N\to\infty} \E\[\abs{\log\(N^\alpha\sum_{k=1}^N w_{I^N_k}\)}^{\eta}; N^\alpha\sum_{k=1}^N w_{I^N_k}>1\]  <\infty \label{eq:limsup2}.
\end{eqnarray}
(See e.g. (3.18) in \cite{Billingsley99}.)
\par

\bigskip

{\bf Step 1.} We start by proving (\ref{eq:limsup1}).
Let $E_N$ be an arbitrary subset such that, for some $\eta'>0$,
\begin{equation}
\label{eq:small-set}
\lim_{N\to\infty} \ N^{\eta'} \P(E_N) \ = \ 0.
\end{equation}

From Lemma \ref{lem:27}, it is enough to show that for such $E_N$,
$$
\limsup_{N\to\infty} \ \E\[\abs{\log\(N^\alpha \sum_{k=1}^N w_{I_k}\)}^{\eta},  \sum_{k=1}^{N} \ N^\alpha w_{I_k} \leq 1, E_N^c \] \  < \ \infty.
$$
To ease the notation, we write $J^N \equiv J$ and $I^N\equiv I$. We also use the same notation for the ordered set $\hat I$ and $\hat J$. First, 
\begin{align}\label{eq:speedSelPD-0}
&\E\[\abs{\log\(N^\alpha \sum_{k=1}^N w_{I_k}\)}^{\eta};  \sum_{k=1}^{N} \ N^\alpha w_{I_k} \leq 1, E_N^c \] \nonumber\\
& \leq   \E\bigg[\(-\log\( N^\alpha  w_{\hat I_1}\)\)^{\eta}; { N^\alpha w_{\hat I_1} \leq 1, E_N^c   } \bigg] \nonumber\\
& =   \int_0^\infty \P\bigg(  - \log\(N^\alpha  w_{\hat I_1}\) \geq u^{\frac{1}{\eta}},  N^\alpha w_{\hat I_1} \leq 1,  E_N^c    \bigg) du \nonumber\\ 
& \leq   \int_0^\infty \P\bigg(   N^\alpha w_{\hat I_1} \leq \exp(-u^{\frac{1}{\eta}}), E_N^c \bigg) du \nonumber\\
&\leq  \int_0^\infty \P\bigg(   N^\alpha w_{\hat J_1} \leq \exp(-u^{\frac{1}{\eta}}), E_N^c  \bigg) du \nonumber\\
& =  \int_0^\infty \E\left[ \(1- \frac{\sum_{i : w_i \geq  \frac{1}{N^\alpha} \exp(-u^\frac{1}{\eta})  } w_i^\beta }{\sum_{i=1}^{N^\rho} w_i^\beta } \)^N; E_N^c\right] du \nonumber \\
& \leq  \int_0^\infty \E\left[ \exp\( -  N \frac{\sum_{i : w_i \geq  \frac{1}{N^\alpha} \exp(-u^\frac{1}{\eta})  } w_i^\beta }{ \sum_{i=1}^{N^\rho} w_i^\beta } \); E_N^c\right] du, \nonumber  \\
\end{align}
where for the third inequality we have used the stochastic inequality
$$
\max_{1\leq k \leq N}N^\alpha w_{J_k} \leq \max_{1\leq k \leq N}N^\alpha w_{I_k}
$$
(see  our coupling $(I^N,J^N) \equiv (I,J)$ described at the beginning of Section \ref{sect:a_pos}.)

\medskip

We now handle the expectation appearing in the r.h.s. of $\eqref{eq:speedSelPD-0}$. Let $2 c\in(0,1-\beta)$ and define
$$
E_N \ = \
\left\{ \sum_{i=1}^{N^\rho} w_i^\beta > c^{-1} N^{\rho(1-\beta)} \right\}.
$$
By Lemma \ref{lem:LLD2}, (\ref{eq:small-set}) is satisfied. Further, using (\ref{eq:speedSelPD-0}), we have
\begin{align*}
&\E\[\abs{\log\(N^\alpha \sum_{k=1}^N w_{I_k}\)}^{\eta};  \sum_{k=1}^{N} \ N^\alpha w_{I_k} \leq 1, E_N^c \]\\  
& \leq    \int_0^\infty \E\left[ \exp( - c N \frac{\sum_{i : w_i \geq  \frac{1}{N^\alpha} \exp(-u^\frac{1}{\eta})  } w_i^\beta }{N^{\rho(1-\beta)}} ) \right] du \\
& =   \int_0^\infty \E\left[ \exp( - c N^{\chi \beta} \sum_{i : w_i \geq  \frac{1}{N^\alpha} \exp(-u^\frac{1}{\eta})  } w_i^\beta  ) \right] du.
\end{align*}

Define $\bar c =  \min_{(0,1]} \frac{1}{x^\beta}(1-\exp(-c x^\beta))>0$.
The latter, together with Campbell's formula and Lemma \ref{le:pppsIdentities}, and the inequality
$N^{\chi-\alpha}\leq1$ in the last line, yield
\begin{align*}
 \E\left[ \exp( - c N^{\chi \beta} \sum_{i : w_i \geq  \frac{1}{N^\alpha} \exp(-u^\frac{1}{\eta})  } w_i^\beta  ) \right] & =   
 \exp\bigg( \int_{\frac{\exp(-u^{\frac{1}{\eta}})}{N^\alpha}}^\infty \left( e^{- c (N^{\chi }  x)^{\beta} }-1 \right) \frac{dx}{x^2} \bigg) \\
 & =  \exp\bigg( N^\chi \int_{ \frac{N^\chi}{N^\alpha}  \exp(-u^{\frac{1}{\eta}})}^\infty \left( e^{- c   x^{\beta} }-1 \right) \frac{dx}{x^2} \bigg) \\
 & \leq  \exp\bigg( N^\chi \int_{ \frac{N^\chi}{N^\alpha}  \exp(-u^{\frac{1}{\eta}})}^1 \left( e^{- c   x^{\beta} }-1 \right) \frac{dx}{x^2} \bigg) \\
 &  \leq  \exp\bigg( - \bar c N^\chi \int_{ \frac{N^\chi}{N^\alpha}   \exp(-u^{\frac{1}{\eta}})}^1 \frac{dx}{x^{2-\beta}} \bigg) \\
 & \leq  \exp(-\frac{ \bar c}{1-\beta} \exp( (1-\beta)u^{\frac{1}{\eta}} ) ).
\end{align*}
Finally, (\ref{eq:limsup1}) follows from the observation that 
$$
\int_0^\infty \exp\bigg( -\frac{\bar c}{1-\beta}\exp\((1-\beta)u^\frac{1}{\eta}\) \bigg) du<\infty.
$$

\bigskip

{\bf Step 2.} We now prove (\ref{eq:limsup2}).
Let $\delta>0$.  By Lemma \ref{lemmm}, we can pick $c>0$ to ensure that
$$
\forall b\geq 1, x\geq 0, \ \ \log^{\eta}(b+x) \leq  c b^\delta + c x.
$$

Then, on the set  $\{ N^\alpha\sum_{k=1}^N w_{I_k}>1 \}$, and using the fact that $\log(x) \leq x$ for $x>0$ and 
$\log^\eta(x)\leq \tilde c x$ for some $\tilde c>0$ and all $x>1$, we have
\begin{align*}
\log^{\eta}\(N^\alpha\sum_{k=1}^N w_{I_k}\) & =  \Ind{w_{\hat I_1}>N^{-\alpha} }   \log^{\eta}\(N^\alpha\sum_{k=1}^N w_{I_k} \Ind{w_{I_k}>N^{-\alpha}}
                   \ +\ N^\alpha\sum_{k=1}^N w_{I_k} \Ind{w_{I_k}\leq N^{-\alpha}}\) \\
                   &   + \Ind{ w_{\hat I_1}\leq N^{-\alpha}}  \log^{\eta}\(
                    N^\alpha\sum_{k=1}^N w_{I_k} \Ind{w_{I_k}\leq N^{-\alpha}}\)  \\
& \leq c  N^{\delta \alpha}(\sum_{k=1}^N w_{I_k}\Ind{w_{I_k}> N^{-\alpha}})^\delta 
     + (c+\tilde c) N^\alpha \sum_{k=1}^N w_{I_k} \Ind{w_{I_k}\leq N^{-\alpha}}. 
\end{align*}

First, a direct application of Corollary \ref{cor:ldd} with $a,\delta=1$ and $b=\alpha$, shows that 
\begin{align*}
N^\alpha \sum_{k=1}^N \E\big[w_{I_k}; w_{I_k}\leq  N^{-\alpha} \big] \ = \ \bO(1).
\end{align*}

Second, take $\delta<1-\beta<1$. By Minkowski's inequality with $q= \delta^{-1}$, we have 
$$
\(\sum_{k=1}^N w_{I_k}\Ind{w_{I_k}>N^{-\alpha}}\)^\delta
\leq \sum_{k=1}^N w^{\delta}_{I_k}\Ind{w_{I_k}>N^{-\alpha}}.
$$
Then, by a direct application of Corollary \ref{cor:ldd2} with $a=\delta$ and $b=\alpha$, 

\begin{align*}
\E\[ N^{\delta \alpha}\big(\sum_{k=1}^N w_{I_k}\Ind{w_{I_k}>N^{-\alpha}}\big)^\delta\]
&\leq N^{\delta\alpha} \E\[ \sum_{k=1}^N w^\delta_{I_k}\Ind{w_{I_k}>N^{-\alpha}} \] \\
& = \bO\(N^{\alpha \delta + 1 -\rho(1-\beta) + \alpha(1-\beta-\delta)}\) \ = \ \bO(1).
\end{align*}
\end{proof}


 \bibliographystyle{abbrv} 
 \bibliography{Math}

\begin{thebibliography}{10}

\bibitem{BerestyckiBerestyckiSchweinsberg2013}
J.~Berestycki, N.~Berestycki, and J.~Schweinsberg.
\newblock The genealogy of branching brownian motion with absoprtion.
\newblock {\em The Annals of Probability}, 41(2):527--618, 2013.

\bibitem{Bertoin2006}
J.~Bertoin.
\newblock {\em Random fragmentation and coagulation processes}.
\newblock Cambridge Studies in Advanced Mathematics. Cambridge University Press, 2006.

\bibitem{Billingsley99}
P.~Billingsley.
\newblock {\em Convergence of probability measures}.
\newblock Wiley series in probability and statistics. {Probability} and statistics section. Wiley, New York, 2nd ed edition, 1999.

\bibitem{BirzuHallatschekKorolev2018}
G.~Birzu, O.~Hallatschek, and K.~S. Korolev.
\newblock Fluctuations uncover a distinct class of traveling waves.
\newblock {\em Proceedings of the National Academy of Sciences}, 115(16):E3645--E3654, 2018.

\bibitem{BirzuHallatschekKorolev2020}
G.~Birzu, O.~Hallatschek, and K.~S. Korolev.
\newblock Genealogical structure changes as range expansions transition from pushed to pulled.
\newblock {\em Proceedings of the National Academy of Sciences}, 118(34):e2026746118, 2021.

\bibitem{BolthausenSznitman98}
E.~Bolthausen and A.-S. Sznitman.
\newblock On ruelle's probability cascades and an abstract cavity method.
\newblock {\em Communications in Mathematical Physics}, 197(2):247--276, 1998.

\bibitem{BrunetDerrida2012}
{\'E}.~Brunet and B.~Derrida.
\newblock How genealogies are affected by the speed of evolution.
\newblock {\em Philosophical Magazine}, 92(1-3):255--271, 2012.

\bibitem{BrunetDerridaMullerMunier2006}
{\'E}.~Brunet, B.~Derrida, A.~H. Mueller, and S.~Munier.
\newblock Phenomenological theory giving the full statistics of the position of fluctuating pulled fronts.
\newblock {\em Physical Review E}, 73(5):056126, 2006.

\bibitem{BrunetDerridaMullerMunier2007}
{\'E}.~Brunet, B.~Derrida, A.~H. Mueller, and S.~Munier.
\newblock Effect of selection on ancestry: An exactly soluble case and its phenomenological generalization.
\newblock {\em Physical Review E}, 76:041104, 2007.

\bibitem{Cannings74}
C.~Cannings.
\newblock The latent roots of certain markov chains arising in genetics: A new approach, {I}. haploid models.
\newblock {\em Advances in Applied Probability}, 6(2):260--290, 1974.

\bibitem{Cannings75}
C.~Cannings.
\newblock The latent roots of certain markov chains arising in genetics: A new approach, {II}. further haploid models.
\newblock {\em Advances in Applied Probability}, 7(2):264--282, 1975.

\bibitem{CortinesMallein2017}
A.~Cortines and B.~Mallein.
\newblock A {$N$}-branching random walk with random selection.
\newblock {\em ALEA Lat. Am. J. Probab. Math. Stat.}, 14(1):117--137, 2017.

\bibitem{CortinesMallein2018}
A.~Cortines and B.~Mallein.
\newblock {The genealogy of an exactly solvable Ornstein–Uhlenbeck type branching process with selection}.
\newblock {\em Electron. Comm. Probab.}, 23(none):1 -- 13, 2018.

\bibitem{DesaiFisher2007}
M.~M. Desai and D.~S. Fisher.
\newblock Beneficial mutation{\textendash}selection balance and the effect of linkage on positive selection.
\newblock {\em Genetics}, 176(3):1759--1798, 2007.

\bibitem{DesaiWalczakFisher2013}
M.~M. Desai, A.~M. Walczak, and D.~S. Fisher.
\newblock Genetic diversity and the structure of genealogies in rapidly adapting populations.
\newblock {\em Genetics}, 193(2):565--585, 2013.

\bibitem{DrmotaIksanovMohleRoesler2007}
M.~Drmota, A.~Iksanov, M.~Möhle, and U.~Roesler.
\newblock Asymptotic results concerning the total branch length of the bolthausen–sznitman coalescent.
\newblock {\em Stochastic Processes and their Applications}, 117(10):1404--1421, 2007.

\bibitem{DrmotaIksanovMohleRoesler2009}
M.~Drmota, A.~Iksanov, M.~Möhle, and U.~Roesler.
\newblock A limiting distribution for the number of cuts needed to isolate the root of a random recursive tree.
\newblock {\em {Random Structures \& Algorithms}}, 34(3):319--336, 2009.

\bibitem{Durrett2008}
R.~Durrett.
\newblock {\em Probability models for DNA sequence evolution}, volume~2.
\newblock Springer, 2008.

\bibitem{EtheridgePenington2022}
A.~Etheridge and S.~Penington.
\newblock Genealogies in bistable waves.
\newblock {\em Electronic Journal of Probability}, 27:1--99, 2022.

\bibitem{Fisher2013}
D.~S. Fisher.
\newblock Asexual evolution waves: fluctuations and universality.
\newblock {\em Journal of Statistical Mechanics: Theory and Experiment}, 2013(01):P01011, 2013.

\bibitem{Fisher1923}
R.~A. Fisher.
\newblock {XXI}.—on the dominance ratio.
\newblock {\em Proceedings of the royal society of Edinburgh}, 42:321--341, 1923.

\bibitem{FoutelSchertzerTourniaire2024}
F.~Foutel-Rodier, E.~Schertzer, and J.~Tourniaire.
\newblock Convergence of spatial branching processes to $\alpha$-stable csbps: Genealogy of semi-pushed fronts.
\newblock {\em Preprint arXiv:2402.05096 [math.PR]}, 2024.

\bibitem{GoldschmidtMartin2005}
C.~Goldschmidt and J.~Martin.
\newblock {Random recursive trees and the Bolthausen-Sznitman coalesent}.
\newblock {\em Electronic Journal of Probability}, 10(none):718 -- 745, 2005.

\bibitem{HaanFerreira2006}
L.~Haan and A.~Ferreira.
\newblock {\em Extreme Value Theory}.
\newblock Springer Series in Operations Research and Financial Engineering. Springer New York, NY, 1 edition, 2006.

\bibitem{Haldane1927}
J.~B.~S. Haldane.
\newblock A mathematical theory of natural and artificial selection, part {V}: selection and mutation.
\newblock In {\em Mathematical Proceedings of the Cambridge Philosophical Society}, volume~23, pages 838--844. Cambridge University Press, 1927.

\bibitem{HaldaneJayakar1963}
J.~B.~S. Haldane and S.~D. Jayakar.
\newblock Polymorphism due to selection of varying direction.
\newblock {\em Journal of Genetics}, 58(2):237--242, 1963.

\bibitem{KerstingJegousseWences2021}
G.~Kersting, A.~Siri-J{\'{e}}gousse, and A.~H.~Wences.
\newblock Site frequency spectrum of the bolthausen-sznitman coalescent.
\newblock {\em ALEA Lat. Am. J. Probab. Math. Stat.}, 18(1):1483, 2021.

\bibitem{Kingman1978}
J.~Kingman.
\newblock The representation of partition structures.
\newblock {\em Journal of the London Mathematical Society}, s2-18(2):374--380, 1978.

\bibitem{Kingman1982}
J.~Kingman.
\newblock The coalescent.
\newblock {\em Stochastic Processes and their Applications}, 13(3):235--248, 1982.

\bibitem{Kingman1992}
J.~Kingman.
\newblock {\em Poisson processes}.
\newblock Oxford Studies in Probability. Clarendon Press, 1992.

\bibitem{LiuSchweinsberg2023}
J.~Liu and J.~Schweinsberg.
\newblock Particle configurations for branching brownian motion with an inhomogeneous branching rate.
\newblock {\em ALEA Lat. Am. J. Probab. Math. Stat.}, 20(1):731–803, 2023.

\bibitem{MacArthurWilson2001}
R.~H. MacArthur and E.~O. Wilson.
\newblock {\em The theory of island biogeography}, volume~1.
\newblock Princeton university press, 2001.

\bibitem{MohleSagitov2001}
M.~Möhle and S.~Sagitov.
\newblock {A classification of coalescent processes for haploid exchangeable population models}.
\newblock {\em The Annals of Probability}, 29(4):1547 -- 1562, 2001.

\bibitem{NeherHallatschek2013}
R.~A. Neher and O.~Hallatschek.
\newblock Genealogies of rapidly adapting populations.
\newblock {\em Proceedings of the National Academy of Sciences}, 110(2):437--442, 2013.

\bibitem{Pitman99}
J.~Pitman.
\newblock {Coalescents with multiple collisions}.
\newblock {\em The Annals of Probability}, 27(4):1870 -- 1902, 1999.

\bibitem{PitmanYor97}
J.~Pitman and M.~Yor.
\newblock The two-parameter poisson-dirichlet distribution derived from a stable subordinator.
\newblock {\em The Annals of Probability}, 25(2):855--900, 1997.

\bibitem{GriffithsTavare1994}
G.~Robert~C. and T.~Simon.
\newblock Sampling theory for neutral alleles in a varying environment.
\newblock {\em Philosophical Transactions: Biological Sciences}, 344(1310):403--410, 1994.

\bibitem{RobertsSchweinsberg2020}
M.~I. Roberts and J.~Schweinsberg.
\newblock {A Gaussian particle distribution for branching Brownian motion with an inhomogeneous branching rate}.
\newblock {\em Electronic Journal of Probability}, 26:1 -- 76, 2021.

\bibitem{RoquesGarnierHamelKlein2012}
L.~Roques, J.~Garnier, F.~Hamel, and {\'{E}}.~Klein.
\newblock Allee effect promotes diversity in traveling waves of colonization.
\newblock {\em Proceedings of the National Academy of Sciences}, 109(23):8828--8833, 2012.

\bibitem{RouzineBrunetWilke2008}
I.~M. Rouzine, E.~Brunet, and C.~O. Wilke.
\newblock The traveling-wave approach to asexual evolution: Muller's ratchet and speed of adaptation.
\newblock {\em Theoretical Population Biology}, 73(1):24--46, 2008.

\bibitem{RouzineWakeleyCoffin2003}
I.~M. Rouzine, J.~Wakeley, and J.~M. Coffin.
\newblock The solitary wave of asexual evolution.
\newblock {\em Proceedings of the National Academy of Sciences}, 100(2):587--592, 2003.

\bibitem{Sagitov99}
S.~Sagitov.
\newblock The general coalescent with asynchronous mergers of ancestral lines.
\newblock {\em Journal of Applied Probability}, 36(4):1116–1125, 1999.

\bibitem{TourniaireSchertzer2024}
E.~Schertzer and J.~Tourniaire.
\newblock Spectral analysis and $k$-spine decomposition of inhomogeneous branching brownian motions. genealogies in fully pushed fronts.
\newblock {\em Preprint arXiv:2301.01697 [math.PR]}, 2024.

\bibitem{Schweinsberg2000Xi}
J.~Schweinsberg.
\newblock {Coalescents with simultaneous multiple collisions}.
\newblock {\em Electronic Journal of Probability}, 5(none):1 -- 50, 2000.

\bibitem{Schweinsberg2003}
J.~Schweinsberg.
\newblock Coalescent processes obtained from supercritical galton–watson processes.
\newblock {\em Stochastic Processes and their Applications}, 106(1):107--139, 2003.

\bibitem{Schweinsberg2017-I}
J.~Schweinsberg.
\newblock {Rigorous results for a population model with selection I: evolution of the fitness distribution}.
\newblock {\em Electronic Journal of Probability}, 22(none):1 -- 94, 2017.

\bibitem{Schweinsberg2017-II}
J.~Schweinsberg.
\newblock {Rigorous results for a population model with selection II: genealogy of the population}.
\newblock {\em Electronic Journal of Probability}, 22(none):1 -- 54, 2017.

\bibitem{WencesJegousseJOMB2024}
A.~Siri-J{\'e}gousse and A.~H. Wences.
\newblock Exchangeable coalescents beyond the cannings class.
\newblock {\em Journal of Mathematical Biology}, 90(1):7, Dec 2024.

\bibitem{SmithHaigh1974}
J.~M. Smith and J.~Haigh.
\newblock The hitch-hiking effect of a favourable gene.
\newblock {\em Genetics Research}, 23(1):23--35, 1974.

\bibitem{Tourniaire2024}
J.~Tourniaire.
\newblock A branching particle system as a model of semipushed fronts.
\newblock {\em The Annals of Probability}, 52, 11 2024.

\bibitem{TsimiringLevineKessler1996}
L.~S. Tsimring, H.~Levine, and D.~A. Kessler.
\newblock Rna virus evolution via a fitness-space model.
\newblock {\em Physical review letters}, 76(23):4440, 1996.

\bibitem{Saarloos2003}
W.~{van Saarloos}.
\newblock Front propagation into unstable states.
\newblock {\em Physics Reports}, 386(2):29--222, 2003.

\bibitem{Wright1931}
S.~Wright.
\newblock Evolution in mendelian populations.
\newblock {\em Genetics}, 16(2):97--159, 03 1931.

\end{thebibliography}

\end{document}